\newcommand*\patchAmsMathEnvironmentForLineno[1]{%
	\expandafter\let\csname old#1\expandafter\endcsname\csname #1\endcsname
	\expandafter\let\csname oldend#1\expandafter\endcsname\csname end#1\endcsname
	\renewenvironment{#1}%
	{\linenomath\csname old#1\endcsname}%
	{\csname oldend#1\endcsname\endlinenomath}}%
\newcommand*\patchBothAmsMathEnvironmentsForLineno[1]{%
	\patchAmsMathEnvironmentForLineno{#1}%
	\patchAmsMathEnvironmentForLineno{#1*}}%
\newcommand{\decomp}{\texttt{d}}
\newcommand{\xmonotone}{column convex}
\newcommand{\ymonotone}{row convex}
\newcommand{\envup}{\ensuremath{P_{\texttt{up}}}}
\newcommand{\envlo}{\ensuremath{P_{\texttt{lo}}}}
\newcommand{\rem}[1]{}
\newcommand{\revision}[1]{{\color{black}#1}}
\newtheorem{lemma}{Lemma}
\newtheorem{theorem}{Theorem}
\newtheorem{definition}{Definition}
\renewcommand{\todo}[1]{{}}
\title{Efficient Parallel Self-Assembly\\ Under Uniform Control Inputs} 
\author{Arne Schmidt\thanks{Department of Computer Science, TU Braunschweig, Germany. $\{$s.fekete, arne.schmidt$\}$@tu-bs.de},  
Sheryl Manzoor\thanks{Work from these authors was partially supported by National Science Foundation IIS-1553063 and IIS-1619278. Department of Electrical and Computer Engineering, University of Houston, USA. atbecker@uh.edu}, Li Huang$^\dagger$, \\ Aaron T.~Becker$^ \dagger $, and S\'{a}ndor P. Fekete$^*$%
}
\begin{document}
	\maketitle
	\begin{abstract}
				\revision{We prove that by successively combining subassemblies, we can achieve {\em sublinear}
				construction times for ``staged'' assembly of micro-scale
                                objects from a large number of tiny particles, for vast 
				classes of shapes; this is a significant advance 
				in the context of programmable matter and self-assembly for 
				building high-yield micro-factories.
				The underlying model has particles
				moving under the influence of uniform external forces until they hit an obstacle; particles
				bond when forced together with a compatible particle. 
				Previous work considered sequential composition of objects, resulting in construction time 
				that is {\em linear} in the number $N$ of particles, which is inefficient for large $N$. 
				Our progress implies critical speedup for 
				constructible shapes; for convex polyominoes, even a {\em constant} construction 
				time is possible. We also show that our construction process can be used for pipelining, 
				resulting in an {\em amortized constant} production time.}
	\end{abstract}
	

\section{Introduction}
\todo{Some parts of the paper is poorly written, the introduction and
	related works in specific. For instance, fifth sentence in related work
	seems incomplete: "Then, starting with a seed-tile the tiles
	continuously attach to the
	partial assembly."}
The new field of programmable matter gives rise to a wide range of algorithmic questions of geometric flavor.
One of the tasks is designing and running efficient production processes for tiny objects with given shape,
without being able to individually handle the \revision{potentially huge number of particles} from which it 
is composed, e.g., building polyominoes from their tiles without the help of tools.\todo{it does not mention how the
	particles are initially moved in rooms. How to get the input of the
	algorithm is not discussed.}

In this paper we use particles that can be controlled by a uniform external force, 
causing all particles to move in a given direction until they hit an obstacle or another blocked particle, as shown in Fig.~\ref{fig:ortho_convex}.
 Recent experimental work by Manzoor et al.~\cite{manzoor2017parallel} showed this is practical for simple ``sticky'' particles, enabling
  assembly by sequentially attaching particles \revision{emanating from different depots within the workspace or supply channels from the outside} to the existing subassembly, \revision{as shown in Fig.~\ref{fig:ortho_convex}}.
The algorithmic challenge is to design the surrounding ``maze'' environment and movement sequence to produce
a desired shape.

A recent paper by Becker et al.~\cite{becker2017tiltassembly} showed that the decision problem of whether a simple polyomino can be built or not
is solvable in polynomial time.  However, this relies on sequential construction
in which one particle at a time is added, resulting in \revision{a {\em linear} number of assembly steps, i.e., a time that 
grows proportional to the number $N$ of particles, which is inefficient for large $N$.} 
\revision{In this paper we provide substantial progress by developing methods that can achieve {\em sublinear}
and in some cases even {\em constant} construction times. Our approaches are based on
hierarchical, ``staged'' processes, in which we allow multi-tile subassemblies to combine at each construction step.}
\todo{The introduction requires more supportive context on	motivation and
	the problem statement and it requires better balance between the
	previous work, problem statement and the main contribution section, the
	current version more focused on the contribution and algorithm detail
	other than motivation and problem statement. }

	\begin{figure}
		\centering
		\hfill
		\stackunder[5pt]{
			\label{fig:ortho_convex:i}	\includegraphics[width=0.2\columnwidth]{./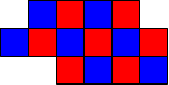}}{Polyomino $P$}\hfill
		 \stackunder[5pt]{
				\label{fig:ortho_convex:a}
				\includegraphics[width=0.3\columnwidth]{./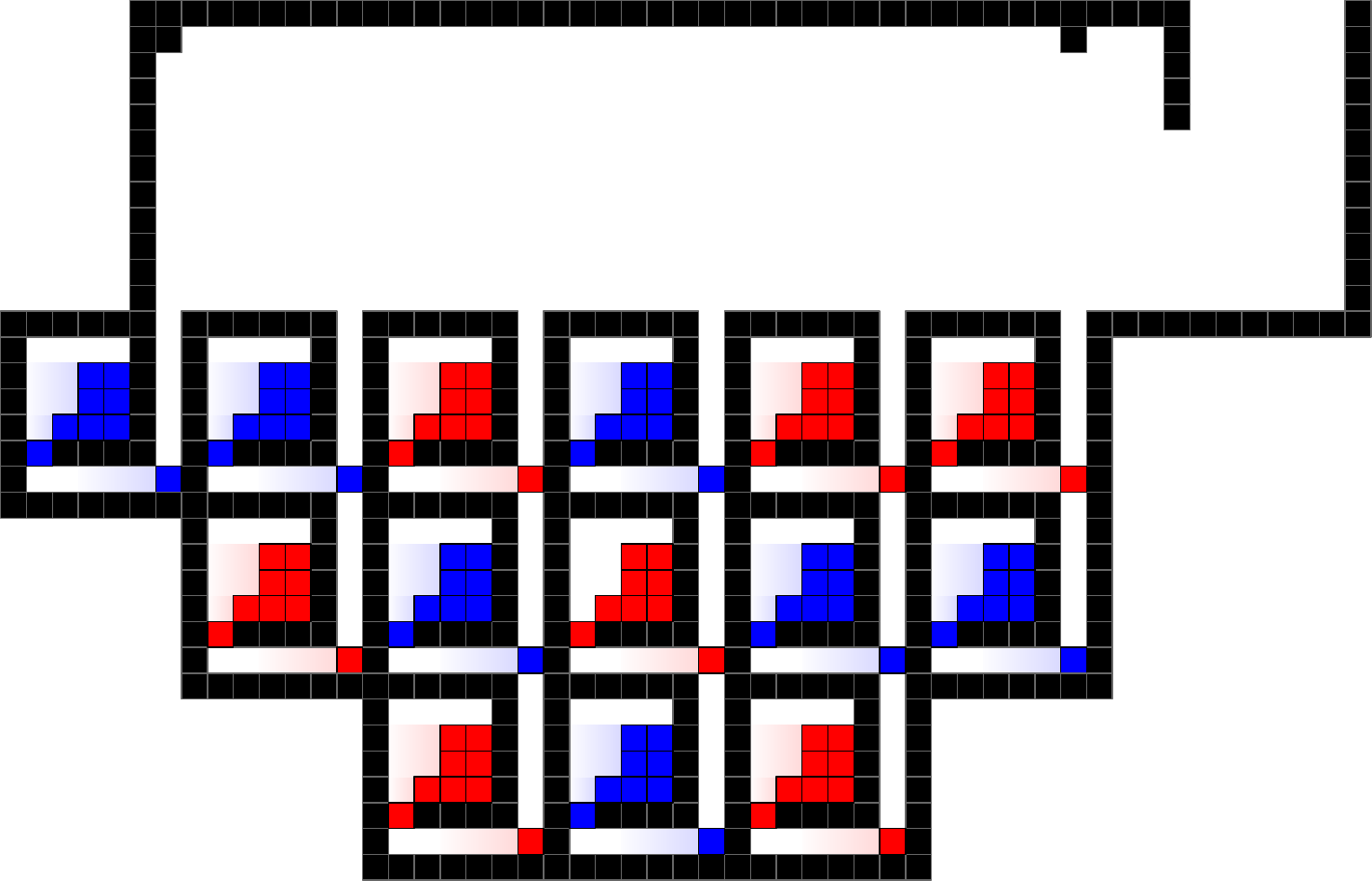}	}{1. Right}		
		 \stackunder[5pt]{
			\label{fig:ortho_convex:b}	
			\includegraphics[width=0.3\columnwidth]{./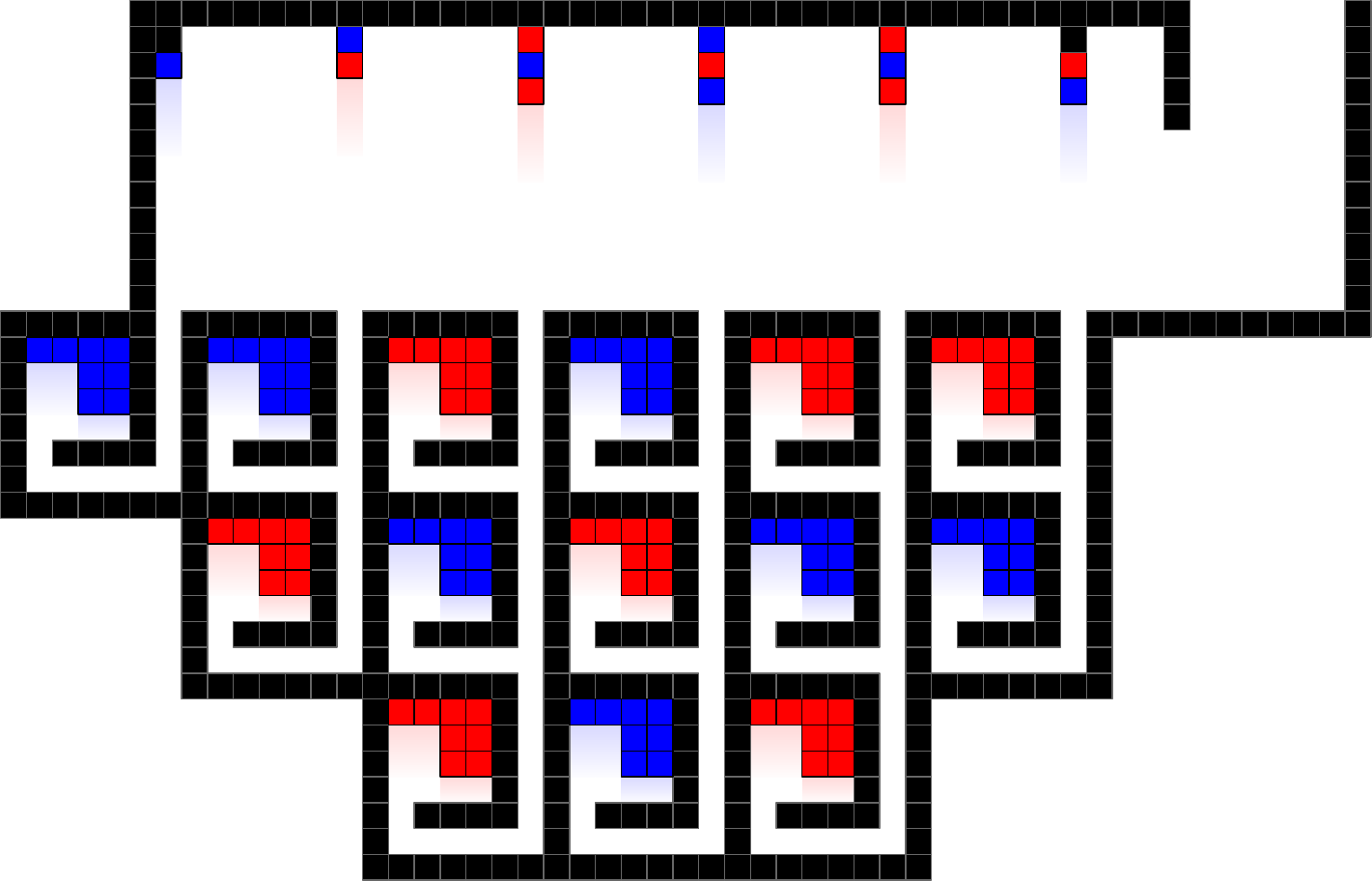}}{2. Up}\\ \vspace{1em}
		
		 \stackunder[5pt]{
			\label{fig:ortho_convex:c}	\includegraphics[width=0.3\columnwidth]{./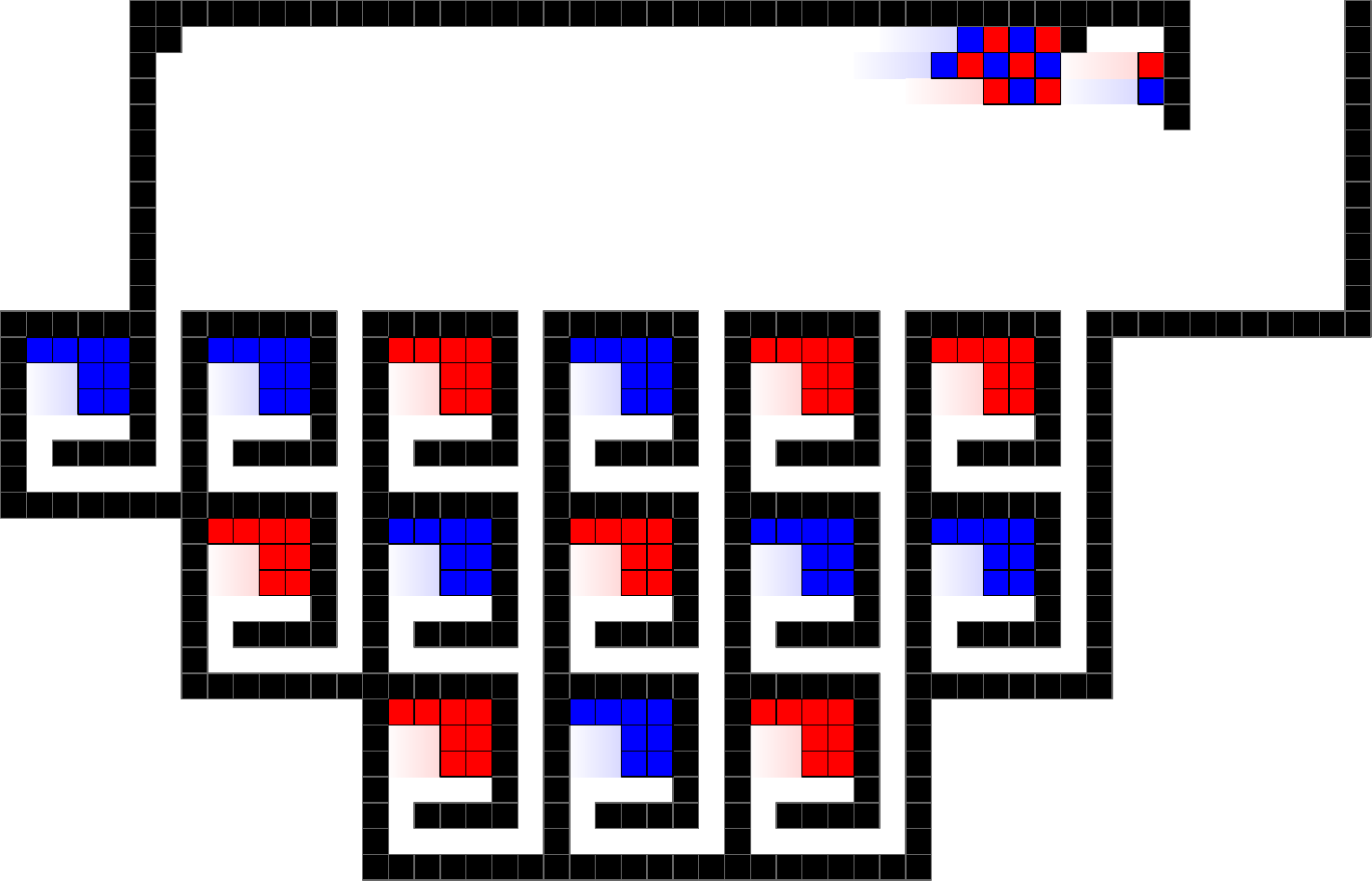}}{3. Right}
		\stackunder[5pt]{
				\label{fig:ortho_convex:d}
				\includegraphics[width=0.3\columnwidth]{./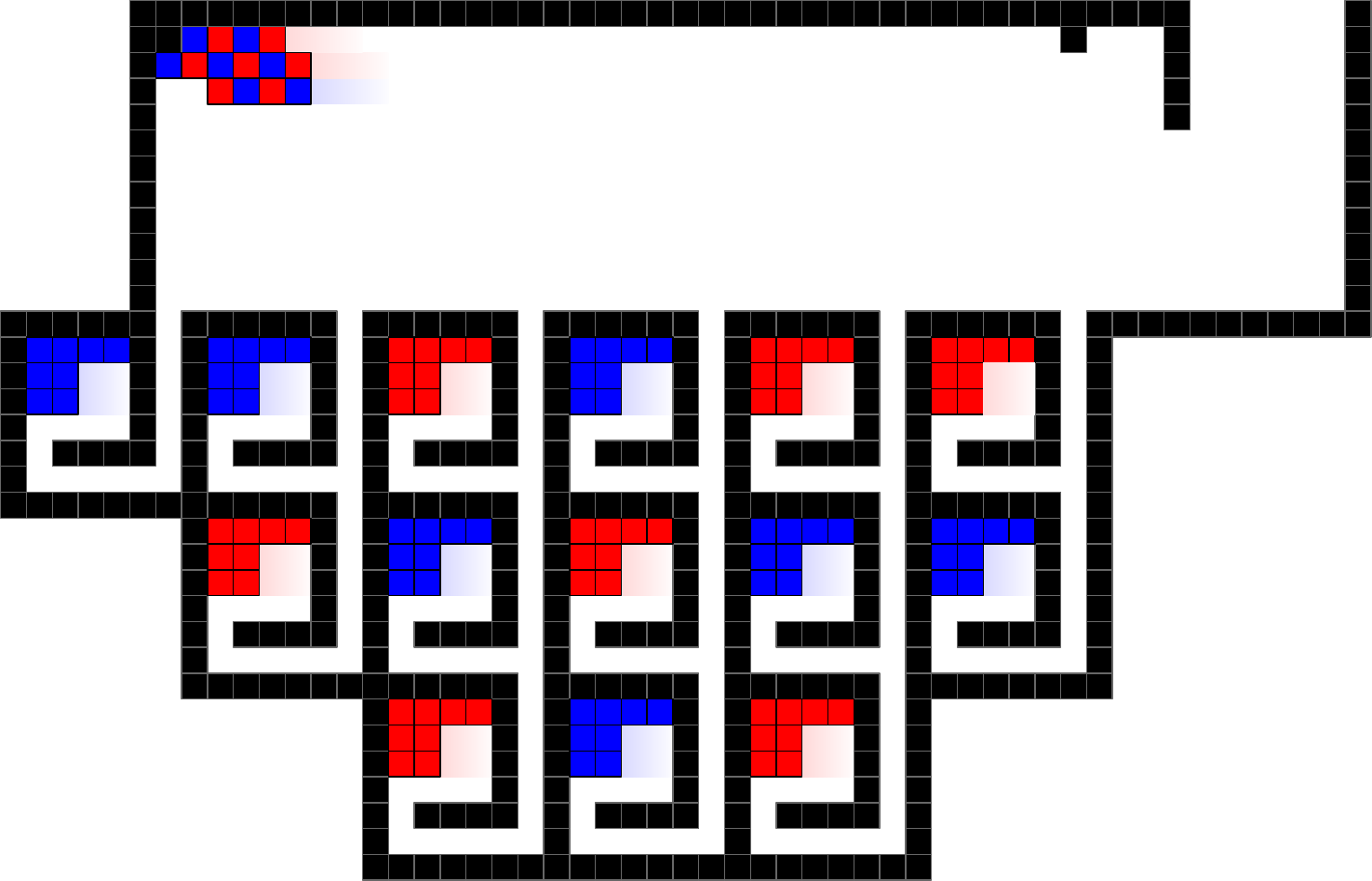}}{4. Left}
		\stackunder[5pt]{
			\label{fig:ortho_convex:e}
			\includegraphics[width=0.3\columnwidth]{./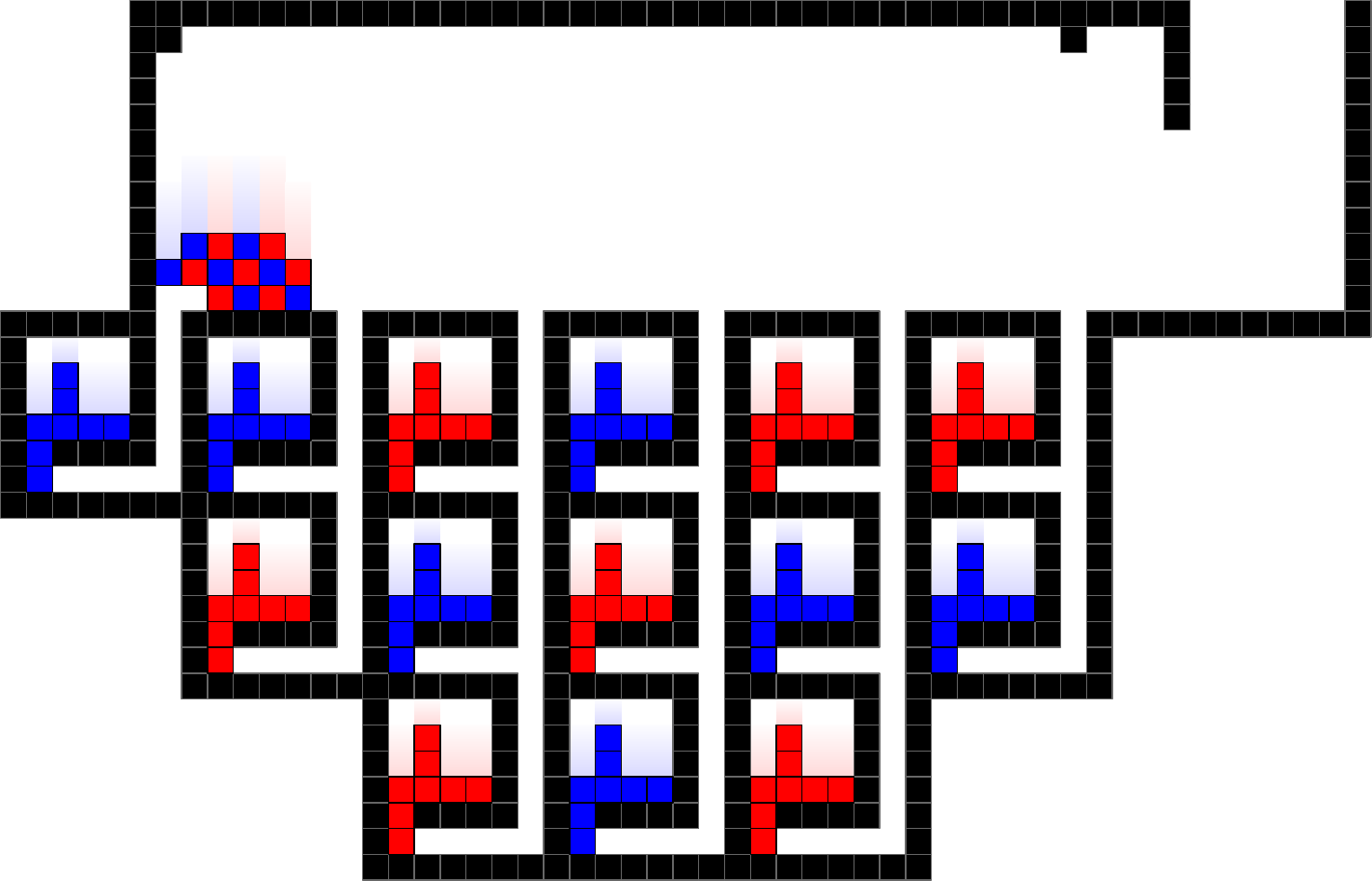}}{5. Down}\\ \vspace{1em}
		
		\stackunder[5pt]{
				\label{fig:ortho_convex:f}
			\includegraphics[width=0.3\columnwidth]{./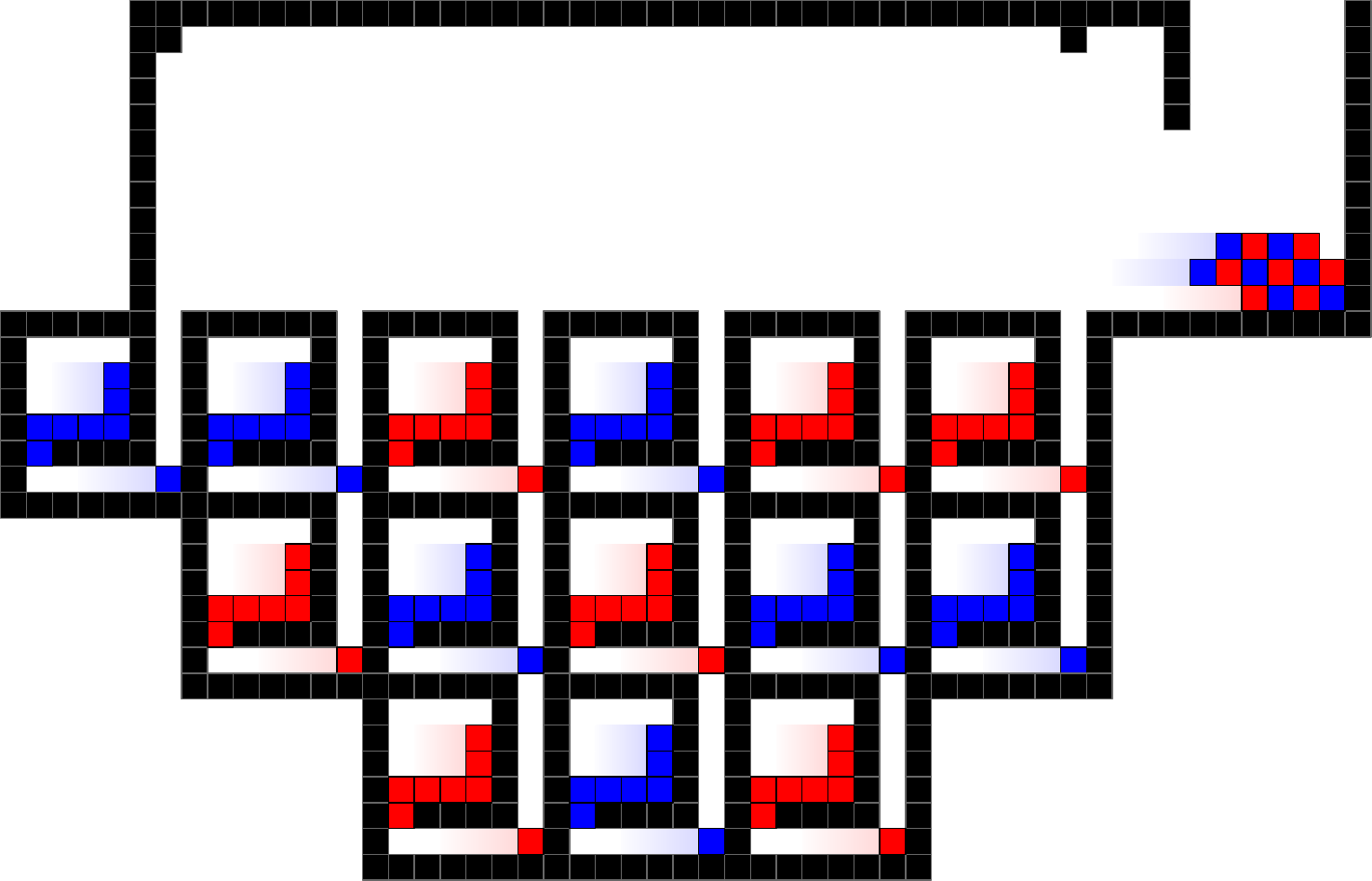}}{6. Right}
		\stackunder[5pt]{
				\label{fig:ortho_convex:g}	
				\includegraphics[width=0.3\columnwidth]{./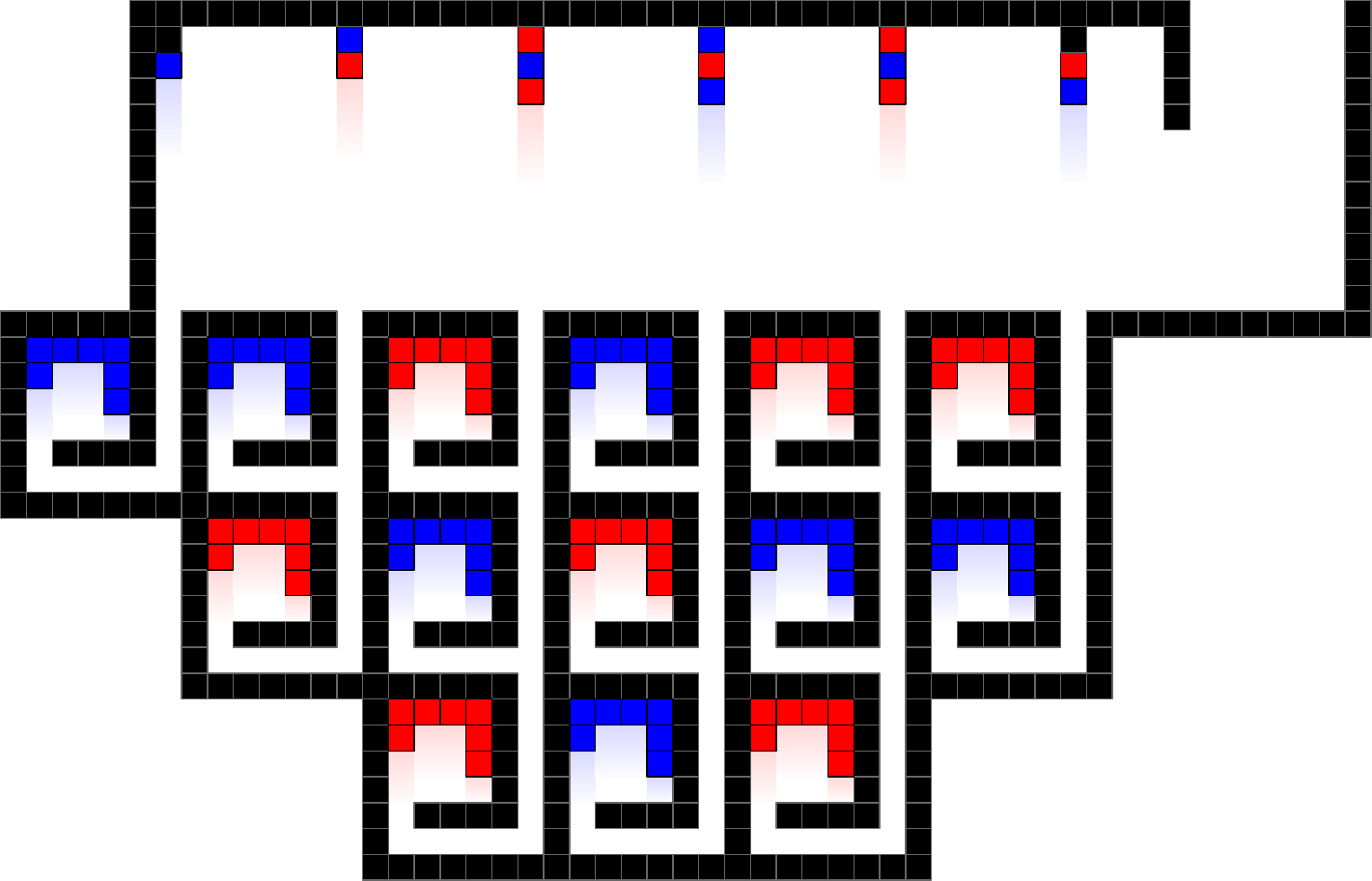}}{7. Up}
			\stackunder[5pt]{
				\label{fig:ortho_convex:h}	\includegraphics[width=0.3\columnwidth]{./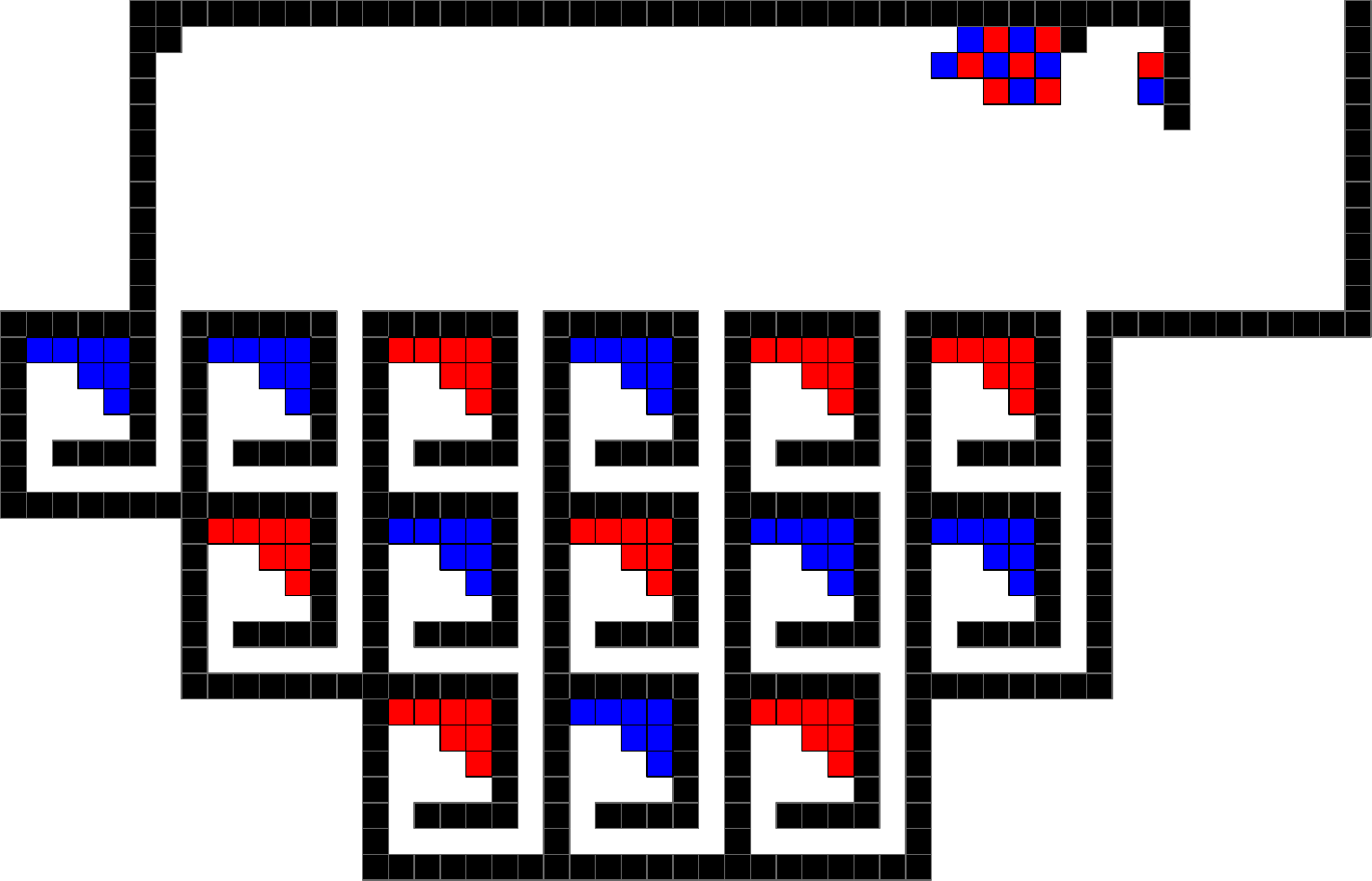}}{8. Right}		
		\caption{Convex polyominoes can be assembled in six movement steps. A copy of the polyomino $P$ is released every five steps after the first copy.
		See video attachment for animation: \protect{\url{https://youtu.be/_R_puO0smPs}}.
		}
		\label{fig:ortho_convex}
		\vspace*{-5mm}
	\end{figure}
\vspace{-0.15cm}
\subsection{Contribution}
\revision{We provide a number of contributions to achieving sublinear construction times 
for polyomino shapes consisting of $N$ pixels (``tiles''), which is critical for the efficient assembly of large
objects. Many of these results are the outcome of decomposing the shape into simpler pieces; as a consequence,
we can describe the construction time in geometric parameters that may be considerably smaller than $N$.
\begin{itemize}
\item
We show that we can decide if a given polyomino $P$ can be recursively constructed from simple
subpieces that are glued together along simple straight cuts (``2-cuts'') in polynomial time.
The resulting production time depends on the number $r(P)$ of \emph{locally reflex} tiles of $P$, which is bounded by $N$,
but may be much smaller. 
\item We show that building a convex polyomino takes $O(1)$ steps.
\item For a monotone polyomino $P$, we need $O(\log \decomp(P))$ steps, where $\decomp(P)\leq N$ is the number of cuts needed to decompose $P$ into convex subpolyominoes.
\item For polyominoes with convex holes, we show that $O(r)$ steps suffice to build the polyomino. 
\item All methods we describe can be pipelined resulting in an amortized constant construction time.
\end{itemize}

We also elaborate the running time for efficiently {\em computing} aspects of the decomposition, as follows.
Finding cuts for a decomposition needs $O(N)$ time for monotone polyominoes.
Simple polyominoes require $O(N+r^2\log r)$ time to find a straight cut and $O(r^2N\log N)$ time to find an arbitrary cut.
Allowing convex holes increases the time to $O(N+r^3\log r)$ and $O(r^3N\log N)$, respectively.

For all these constructions, we show that $N\cdot(\mathcal{C}_P+\sqrt{D})$ obstacles suffice to construct $D$ copies of an $N$-tile polyomino that requires $\mathcal{C}_P$ steps to build.
}

\subsection{Related Work}
In recent years, the problem of assembling a polyomino has been studied intensively using various theoretical models.
Winfree~\cite{winfree1998algorithmic} introduced the \emph{abstract tile self-assembly model} in which tiles with glues on their side can attach to each other if their glue type matches. Then, starting with a \emph{seed-tile}\revision{,} the tiles continuously attach to the partial assembly. If no further tile can attach, the process stops.
Several years later, Cannon et al.~\cite{cannon2012two} introduced the \emph{2-handed  tile self-assembling model (2HAM)} in which sub-assemblies can attach to each other provided that the sum of glue strengths is at least a threshold $\tau$.
Chen and Doty~\cite{chen2017parallelism} introduced a similar model: the \emph{hierarchical tile self-assembling model}.
In 2008, Demaine et al.~\cite{demaine2008staged} introduced the \emph{staged tile self-assembly model} which is based on the 2HAM. Here, sub-assemblies grow in various bins which can then poured together to gain new assemblies.
This model was then further analyzed by Demaine et al.~\cite{demaine2017new} and Chalk et al.~\cite{chalk2016optimal}.
An interesting aspect in all models is that the third dimension can be used to reach specific positions within partial assemblies.
In our paper however, the challenge is to use two dimensions, i.e., an assembly can only bond to another polyomino if the bonding site is completely visible.

All these models have in common that particles, e.g., DNA-strands, self-assemble to bigger structures.
In this paper, however, the particles can only move by global controls and have one glue type on all four sides.
This concept has been studied in practice using biological cells controlled by magnetic fields, see~\cite{kim2015imparting}. \revision{In addition, see~\cite{arbuckle2010self}.}
 Recent work by Zhang et al.~\cite{8206202}  shows there exists a workspace a constant factor larger than the number of agents that enables complete rearrangement for a rectangle of agents.

A more related paper is the work by Manzoor et al.~\cite{manzoor2017parallel}. 
They assemble polyominoes in a pipelined fashion using global control, i.e., by completing a polyomino after each small control sequence the amortized construction time of a polyomino is constant. 
To find a construction sequence building the polyomino only heuristics are used.
Becker et al.~\cite{becker2017tiltassembly} show that it is possible to decide in polynomial time if a hole-free polyomino can be constructed.
However, both papers consider adding one tile at a time.
In this paper, we allow combining partial assemblies at each step. 
We are also able to pipeline this process to achieve an amortized constant production time.

The complexity of controlling robots using a global control has been studied.
Becker et al.~\cite{Becker3810a} show that it is NP-hard to decide if an initial configuration of a robot swarm in a given environment can be transformed into another configuration by only using global control but becomes more tractable if it is allowed to \textit{design} the environment.
Finding an optimal control sequence is even harder. 
\rem{Becker et al.~\cite{becker2014particle} show this problem is PSPACE-complete.}
Related work for reconfiguration of robots with local movement control include work by Walter et al.~\cite{Walter2004}, Vassilvitskii et al.~\cite{vassilvitskii2002general}, and Butler et al.~\cite{butler2004generic}.

	\section{Preliminaries}\label{sec:Preliminaries}
	\noindent\textbf{Workspace:} A \emph{workspace} $\mathcal{W}$ is a planar grid filled with unit-square particles and fixed unit square blocks (\emph{obstacles}).
	Each cell of the workspace contains either a particle, an obstacle, or the cell is \emph{free}.
	
	\noindent\textbf{Movement step:} A \emph{movement step} is one of the four directions \emph{up}, \emph{right}, \emph{down}, \emph{left}.
	One movement step forces every tile or assembly to move to the specified direction until the tile/assembly is blocked by an obstacle.
	
	\noindent\textbf{Polyomino:}
	For a set $P\subset \mathbb{Z}^2$ of $N$ grid points in the plane, the graph
	$G_P$ is the induced grid graph, in which two vertices $p_1, p_2\in P$ are
	connected if they are at unit distance. Any set $P$ with connected grid graph
	$G_P$ gives rise to a {\em polyomino} by replacing each point $p\in P$ by a unit 
	square centered at $p$, which is called a {\em tile}; for simplicity, we also use $P$ to denote the
	polyomino when the context is clear, and refer to $G_P$ as the dual graph of
	the polyomino.
	A polyomino is called {\em hole-free} or {\em simple} if and only if the grid graph
	induced by $\mathbb{Z}^2\setminus P$ is connected. 
	A polyomino $P$ is \emph{\xmonotone} (\emph{\ymonotone}, resp.) if the intersection of any vertical (horizontal, resp.) line and $P$ is connected, i.e., the polyomino is $x$-monotone ($y$-monotone, resp.).
	Furthermore, a polyomino $P$ is called \emph{(orthogonal) convex} if $P$ is column and row convex.
	
	\noindent\textbf{Tiles:}
		A tile $t$ is an unit-square of a polyomino and also represent particles in the workspace.
		There are two kinds of tiles: blue and red tiles.
		Two tiles stick together if their color differs.
		
		\noindent\textbf{Constructibility:}
		A polyomino $P$ is \emph{constructible} if there exists a workspace $\mathcal{W}$ and a sequence $\sigma$ of movement steps that produce $P$.  
		
\rem{	\begin{figure}
		\centering
		\includegraphics[width=0.25\columnwidth]{./figures/corner_tiles.pdf}
		\caption{A polyomino with \revision{locally} convex tiles (red), \revision{locally} reflex tiles (blue), and tiles that are both \revision{locally} convex and \revision{locally} reflex (orange, striped)}
		\label{fig:corner_tiles}
		\vspace*{-6mm}
	\end{figure}
}

	\noindent\textbf{Cuts:}
	A \emph{cut} is an orthogonal curve moving between points of $\mathbb{Z}^2$.
	If any intersection of a cut with the polyomino $P$ has no turn, the cut is called \emph{straight}.
	A $p$\emph{-cut} is a cut that splits a polyomino $P$ into $p$ subpolyominoes.
	Furthermore, a cut is called \emph{valid} if all induced subpolyominoes can be pulled apart into opposite directions without blocking each other.
	A polyomino $P$ is called \emph{(straight) 2-cuttable} if there is a sequence of valid (straight) 2-cuts that subdivide $P$ into monotone subpolyominoes.
	If the subpolyominoes can be pulled apart in horizontal (vertical) directions, we call the cut \emph{vertical (horizontal)}
	An example for 2-cuts can be seen in Fig.~\ref{fig:2-cut:def}.
	In the following we only consider 2-cuts for non-convex polyominoes.

\begin{figure}
	\centering
	\includegraphics[width=0.4\columnwidth]{./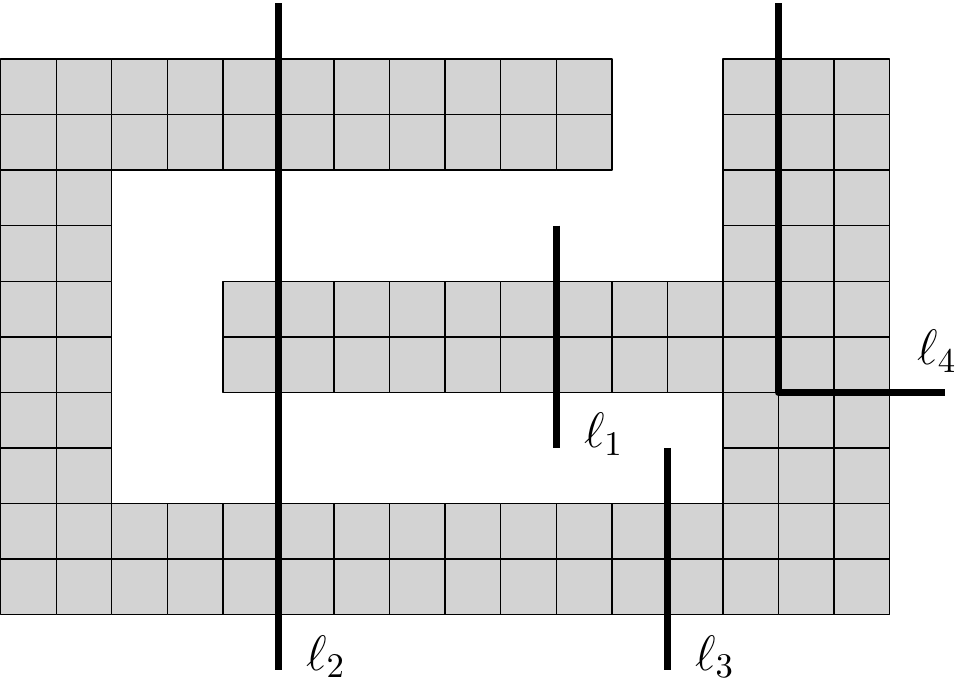}
	\hfill
	\includegraphics[width=0.57\columnwidth]{./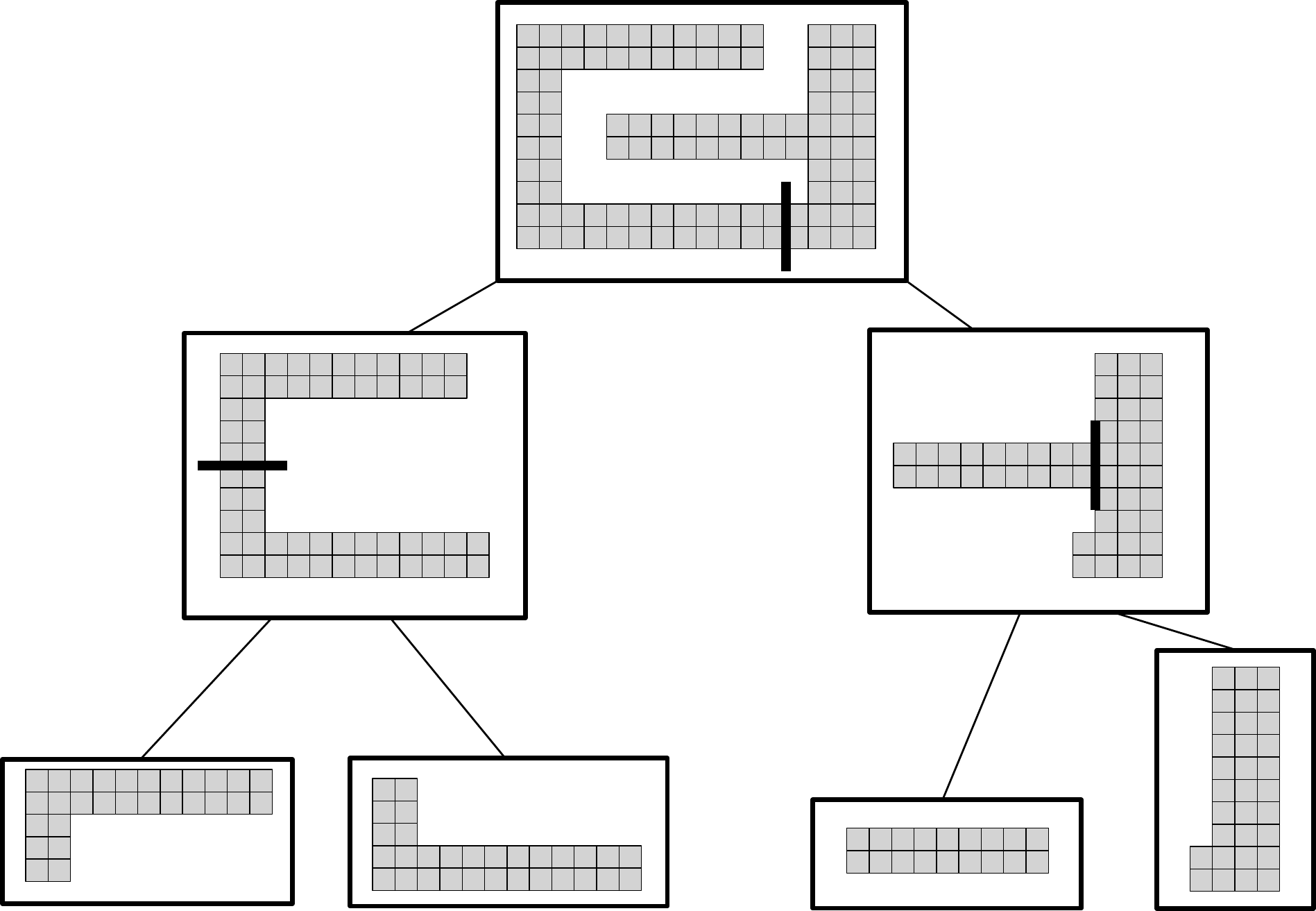}
	\caption{
		Left: 
		(Counter-)Examples for straight 2-cuts: 
		$\ell_1$ is not a 2-cut because we cannot move the left component to the right or left without getting blocked by the other component.
		$\ell_2$ is not a 2-cut because we get more than two components.
		$\ell_3$ is a 2-cut because we get two components which can be pulled apart. $\ell_4$ is not a straight cut. 
		Right: 
		Decomposition tree with straight 2-cuts where the leaves are convex polyominoes.
	}
	\label{fig:2-cut:def}
		\vspace*{-4mm}
\end{figure}

		\section{Monotone Assemblies}\label{sec:MonotoneAssemblies}
	This section focuses on convex and monotone polyominoes.

	\begin{lemma}
		\label{lem:ortho_convex}
		Any convex polyomino $P$ can be assembled in six movement steps.
		\todo{The main question about the designing the workspace remains unanswered
			as well as what sequence of the movement is required, do we need to
			have different movement sequences for every polyomino? is there any
			intelligent strategy for that?	Is there any way to automatize it? In
			the proof of lemma 1, right, left, and down movements are used to
			connect columns together, but it is not clear if it is true for
			constructing any convex polyomino, and is it optimal?}
	\end{lemma}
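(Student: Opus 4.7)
The plan is, for any convex polyomino $P$ with columns of heights $h_1,\dots,h_k$, to build a bespoke workspace in which a single universal six-tilt sequence---matching the first six moves shown in Fig.~\ref{fig:ortho_convex}---drives pre-placed particles into $P$. The structural fact I would exploit is that convexity makes both boundary profiles of $P$ unimodal in the column index: the column heights first weakly increase and then weakly decrease, and the bottom offsets do the mirror image. This splits $P$ cleanly into a left ``increasing'' half and a right ``decreasing'' half that can be slid together from opposite sides, which is what makes a constant-step construction conceivable.

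First I would specify the workspace. For each column $c_i$ I would park a staging pocket containing exactly $h_i$ particles at the target $x$-coordinate of $c_i$, inside an obstacle cavity that isolates it from its neighbors until the appropriate tilt. Pockets for the left (increasing) half open rightward and sit to the left of the central build region; pockets for the right (decreasing) half open leftward and sit to its right. Horizontal obstacle teeth inside each pocket encode the target top profile of $P$, and a single long obstacle below encodes the bottom profile.

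Next I would trace the six tilts: \emph{Right} gathers each pocket's particles against one wall; \emph{Up} stacks them into a column of the correct height, abutting the top-profile tooth; \emph{Right} slides the left-half columns successively into the build region to form the increasing half of $P$; \emph{Left} mirrors this for the right-half columns, completing a polyomino whose top boundary already matches $P$; \emph{Down} seats the assembly against the bottom-profile obstacle; a final \emph{Right} translates the finished polyomino into the release position used later for pipelining. Because only the intended pockets are unobstructed in any given direction, correctness reduces to a local collision analysis between adjacent staging columns.

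The main obstacle I expect is exactly this collision analysis: showing that the obstacle layout can be chosen so that during each of the six tilts only the intended columns move and they never block one another. Unimodality is critical here---because the heights in the left half are monotone, the already-merged prefix never overhangs the next incoming column in the blocking direction, and symmetrically on the right. I would formalize this by induction on the number of columns already merged, using that each newly contributing column's top and bottom both lie within the target profile of $P$. Once interference-freeness is established, the six tilts reproduce $P$ exactly, yielding the claimed constant-step bound.
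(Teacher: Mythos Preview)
Your plan is correct and follows essentially the paper's own proof: build all columns in parallel with the first \emph{Right} and \emph{Up}, use convexity/unimodality to merge the two monotone halves against the tallest column via \emph{Right} then \emph{Left}, and use \emph{Down} and \emph{Right} only to flush the completed shape out (so your ``seating'' step is in fact already redundant once each column has the correct height and top position). The one point the paper treats more concretely than your outline is the collision issue you flag: rather than a general induction, it gives an explicit two-case patch---adding a single extra stopper obstacle when the assembled polyomino would otherwise snag on an exit corridor.
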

	
	\begin{proof}
		The idea of this proof is simple: 
		Subdivide $P$ into vertical lines of width one, build the lines in two steps (see Fig.~\ref{fig:ortho_convex}.1~and~\ref{fig:ortho_convex}.2), and connect these lines with a right and left movement (see Fig.~\ref{fig:ortho_convex}.3~and~\ref{fig:ortho_convex}.4).
		With two more movements we can flush $P$ out of the labyrinth (see Fig.~\ref{fig:ortho_convex}.5~and~\ref{fig:ortho_convex}.6).
		
		\begin{description}
			\item[Assembling a column:] To construct a column of length $n$, we build $n$ containers, each below the previous.
			Each container releases a new tile after each left, down, right movement combination.
			After the right movement all $n$ tiles move to a wall and then have the same $x$-coordinate.
			With an up movement all $n$ tiles stick to a column once the first tile hits the top wall.
			
			\item[Assembling the polyomino:] Assume we have built each column of the polyomino in parallel. 
			With obstacles we can stop each column at the appropriate respective heights.
			A right movement combines all columns left of the column with the maximum height, and a left movement completes the assembly of $P$.
			To remove the polyomino from the assembly area we use a down, right movement.
			Note that the last three movements are left, down, and right, by which we start the next copy of $P$.
		\end{description}
		%

	
		%
		Without further precautions, a polyomino could get stuck in narrow corridors. 
		This problem can be avoided with a simple case analysis.
		First, observe that the leftmost of the topmost tiles of the polyomino is blocked by an obstacle. 
		Let $t$ be this tile and let $x_t$ be the corresponding $x$-coordinate.
		Also, let $s$ be a tile stuck in a corridor having $x$-coordinate $x_s$.
		Only two cases can occur.
		(a) $x_s < x_t$: We place an additional obstacle directly where $t$ was blocked. This forces the polyomino to stop one position earlier.
		(b) $x_s > x_t$: We shift every obstacle with $x$-coordinate higher than the corridor one unit to the right and we add an additional obstacle at the corridor end. 
		The polyomino is then stopped by this obstacle.		
	\end{proof}

	\begin{definition}
		Let $P$ be an $x$-monotone polyomino. The decomposition number 
		\decomp$(P)$ is the minimum number of vertical cuts required to obtain subpolyominoes that are all convex.
		
		The \emph{upper envelope} $\envup\subset P$ consists of (1) all tiles $T$ on the boundary that have no tiles above, and (2) tiles connecting $T$ along the boundary. Analogously define the \emph{lower envelope} $\envlo\subset P$.
		
		We call a straight row $M=\{m_1,\dots,m_k\}\subset \envup$ a \emph{minimum} of $\envup$ if there are two tiles $t_1$ and $t_2$, for which $t_1$ is connected to the top side of $m_1$ and $t_2$ is connected to the top side of $m_k$.
		Analogously define \emph{maximum} for $\envlo$.
	\end{definition}
	
	\begin{figure}
		\centering
		\includegraphics[width=0.5\columnwidth]{./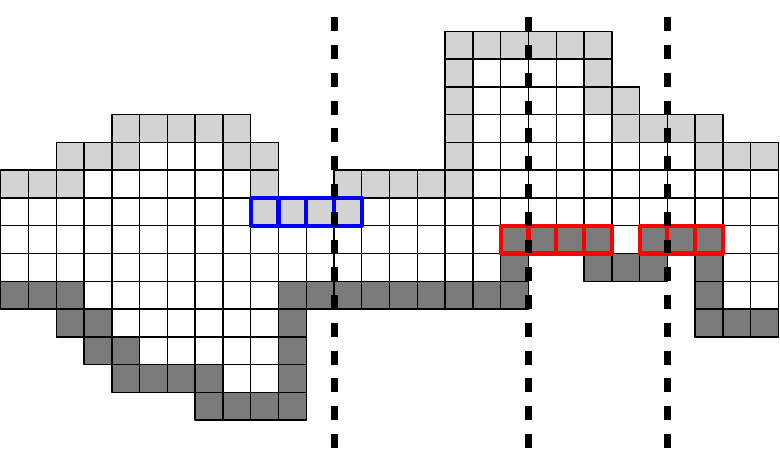}
		\caption{A polyomino $P$. Light grey tiles define $\envup$, dark grey tiles define $\envlo$. Blue framed rows are minima in $\envup$, red framed rows are maxima in $\envlo$. Decomposition number $\decomp(P)=3$, because three vertical lines suffice and three line are necessary because every line hits a maxima/minima.}
		\vspace*{-5mm}
	\end{figure}

	To construct an $x$-monotone polyomino make vertical cuts through the maxima/minima of $\envup$ and $\envlo$, respectively.
	There are at most $\decomp(P)$ many cuts.
	We now can choose a cut, such that on both subpolyominoes $P'$ and $P''$ the decomposition number $\decomp(P')\leq \decomp(P'')\leq \frac 1 2 \decomp(P)$; this can be done with a median search.
	Repeating this procedure on each resulting subpolyomino yields a decomposition tree with depth $\log \decomp(P)$ whose leafs are convex polyominoes.
	
	\begin{lemma}\label{lem:maxima}
		Let $P$ be a polyomino.
		For each minimum and maximum $M$ there must be a vertical cut $\ell$ going through $M$ in order to decompose $P$ into convex subpolyominoes. \todo{lemmas seem to be too simplistic and can potentially be
			better explained in a narrative along with figures}
	\end{lemma}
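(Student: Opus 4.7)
The plan is to argue by contradiction. Fix any decomposition of $P$ into convex subpolyominoes obtained via vertical cuts, and suppose that some minimum $M=\{m_1,\ldots,m_k\}$ of $\envup$ is not crossed by any vertical cut in its interior. The goal is to exhibit a single resulting subpolyomino $P^\ast$ that contains the entire ``valley'' $M$ together with the two ``walls'' $t_1$ and $t_2$, and to show that $P^\ast$ cannot be row convex.

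First I would argue that all tiles of $M$ must lie in a common subpolyomino $P^\ast$, since a vertical cut is the only kind of cut that can separate horizontally adjacent tiles lying in the same row. Since the columns through $m_1$ and $m_k$ are not split by any vertical cut either (such a cut would have to traverse the interior of $M$), the tiles $t_1$ directly above $m_1$ and $t_2$ directly above $m_k$ remain attached to $M$ through those two columns, and therefore also belong to $P^\ast$.

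Next I would inspect the horizontal row at height $y(m_1)+1$ inside $P^\ast$. By the defining property of a minimum of $\envup$, the intermediate columns corresponding to $m_2,\ldots,m_{k-1}$ contain no tile of $P$ strictly above the row of $M$; otherwise those tiles of $M$ would not be the top tiles of their columns and $M$ would not appear as a straight row in the upper envelope. Consequently, within $P^\ast$ this row contains $t_1$ and $t_2$ separated by a strictly empty gap, so the row is disconnected in $P^\ast$, contradicting row convexity. The case of a maximum in $\envlo$ is handled by the symmetric argument using the row directly below $M$.

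The main obstacle I anticipate is making the geometric meaning of a ``straight row in $\envup$'' fully rigorous, so that the emptiness of the intermediate columns above $M$ follows cleanly from the envelope definition; once this local structural fact is pinned down, the row disconnection inside $P^\ast$ and the resulting failure of convexity are immediate, and no further case analysis is needed.
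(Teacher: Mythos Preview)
Your proposal is correct and follows essentially the same argument as the paper: assume no vertical cut passes through $M$, conclude that $t_1$ and $t_2$ land in the same subpolyomino, and observe that the horizontal line through $t_1$ and $t_2$ witnesses a failure of row convexity. The paper's version is terser (it simply asserts that $t_1,t_2$ lie in the same piece and that the horizontal line enters it twice), while you spell out more carefully why vertical cuts cannot separate $m_1$ from $t_1$ or $m_k$ from $t_2$ and why the intermediate columns are empty above $M$; but there is no substantive difference in approach.
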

	
	\begin{proof}
		Suppose we do not need such line $\ell$.
		Let $P'$ be a subpolyomino having a minimum $M'$, through which no cut is made.
		Consider the two tiles $t_1$ and $t_2$ as defined above.
		Both $t_1$ and $t_2$ must be in the same subpolyomino (because there is no cut through $M'$).
		Then, a horizontal line through $t_1$ and $t_2$ enters $P'$ twice and therefore, $P'$ cannot be an convex polyomino. 
	\end{proof}
	
	\begin{lemma}\label{lem:cut_time_monotone}
		Let $P$ be an $x$-monotone polyomino.
		The decomposition number $\decomp(P)$ and the corresponding cuts can be computed in $O(N)$ time.
	\end{lemma}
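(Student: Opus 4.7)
The plan is to compute $\envup$ and $\envlo$ explicitly from a single sweep of $P$, then scan each envelope once to enumerate its minima and maxima, and finally select the cuts by a greedy interval-piercing argument; each stage will run in $O(N)$ time, so the total cost is $O(N)$.

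First, since $P$ is $x$-monotone, the tiles in every column form a contiguous vertical segment, so a single pass over the $N$ tiles records, for every column index $x$, the topmost $y$-coordinate $y_{\mathrm{top}}(x)$ and the bottommost $y_{\mathrm{bot}}(x)$. The resulting two sequences encode $\envup$ and $\envlo$, respectively. A left-to-right scan over $y_{\mathrm{top}}$ then extracts every minimum of $\envup$: by definition it is precisely a maximal run of consecutive columns on which $y_{\mathrm{top}}$ is constant and that is flanked on both sides by strictly greater values. The analogous scan over $y_{\mathrm{bot}}$ produces all maxima of $\envlo$. Each run is naturally described by its column interval.

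By Lemma~\ref{lem:maxima}, every such interval must be crossed by a vertical cut; conversely, a vertical position inside a minimum or maximum removes the only local obstruction to row-convexity. Hence $\decomp(P)$ equals the minimum number of vertical positions that pierce the combined family of intervals. Since both scans produce their intervals in left-to-right order, merging the two sorted lists and running the standard greedy interval-piercing algorithm (sweep left to right and, whenever the current interval is not yet pierced, place a cut at its right endpoint) attains the optimum in $O(N)$ additional time, and the explicit cuts it selects are the desired decomposition.

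The main obstacle I foresee is arguing that minima of $\envup$ and maxima of $\envlo$ are the \emph{only} obstructions to convexity, so the greedy piercing set is not only necessary (by Lemma~\ref{lem:maxima}) but also sufficient to decompose $P$ into convex pieces. This reduces to checking that on any column range containing no minimum of $\envup$ and no maximum of $\envlo$, the induced sub-polyomino is already row-convex --- a short case analysis that I would confirm before committing to the linear-time piercing count.
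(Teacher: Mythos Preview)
Your proposal is correct and follows essentially the same approach as the paper: sweep once to find the minima of $\envup$ and maxima of $\envlo$, then greedily pierce the resulting column intervals to obtain the optimal set of vertical cuts. Your framing via the standard interval-piercing argument is in fact cleaner than the paper's ad-hoc merge of the two sorted lists, and your explicit flag about verifying sufficiency (that piercing all minima/maxima actually yields convex pieces) is a caution the paper glosses over by citing only the necessity direction from Lemma~\ref{lem:maxima}.
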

	
	\begin{proof}
		Finding the minima and maxima of $\envup$ and $\envlo$, respectively, can be found in $O(N)$ time by sweeping from the left boundary to the right boundary.
		Having the minima $M_{\texttt{up}}$ and maxima $M_{\texttt{lo}}$, both in sorted order from left to right, we repeat the following procedure:
		\begin{itemize}
			\item 	Let $M_0\in M_{\texttt{up}}$ and $M'_0\in M_{\texttt{lo}}$ be the leftmost minima/maxima, resp.
			\item If the projection of $M_0$ and $M'_0$ to the $x$-axis overlaps with at least two tiles, then output a vertical line going through $M_0$ and $M'_0$, and remove both from $M_{\texttt{up}}$ and $M_{\texttt{lo}}$, resp.
			\item If this is not the case, output a vertical line going through the minima/maxima that ends first, and remove this minima/maxima from $M_{\texttt{up}}$ or $M_{\texttt{lo}}$, resp.
		\end{itemize}
		This procedure costs $O(\decomp (P))$ time.
		In total, this is $O(N)$ time.
	    The correctness follows from Lemma~\ref{lem:maxima}.
	\end{proof}
	
	\begin{theorem} \label{th:monotone}
		Any $x$-monotone polyomino $P$ with decomposition number \decomp$(P)>0$ can be assembled in $O(\lceil\log (1+\decomp(P))\rceil)$ unit steps.
		Furthermore, this process can be pipelined yielding a construction time of amortized $O(1)$ unit steps.
	\end{theorem}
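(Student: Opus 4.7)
The plan is to execute the recursive median-cut decomposition described just before the theorem. Use Lemma~\ref{lem:cut_time_monotone} to compute a balanced decomposition tree of depth $O(\lceil \log(1+\decomp(P))\rceil)$ whose leaves are convex subpolyominoes. Build each leaf in parallel using the six-step construction of Lemma~\ref{lem:ortho_convex}, then walk up the tree, combining pairs of finished sub-assemblies level by level. The core claim is that each internal node of the tree can be realized by a \emph{constant} number of additional movement steps on top of the steps that finish its children, so that the overall assembly time is proportional to the height of the tree.

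For the constant-time merge at an internal node, I would exploit the fact that every cut in the tree is a vertical straight $2$-cut: by definition, the two children can be pulled apart horizontally without blocking each other, hence the two finished children can symmetrically be \emph{pushed} together along the same horizontal axis. Concretely, after the children have emerged from their sub-workspaces, a ``right'' movement followed by a ``left'' movement (mirrored as needed so every pair of siblings meets against a common wall of obstacles placed at the exact position of their cut) fuses the two sub-assemblies along the cut, while two further vertical moves flush the merged piece out of the current merging chamber into the next level's chamber. This gives a $O(1)$ sequence per level. Since the levels of the tree are traversed synchronously across all merging chambers at a given depth, the movement sequence at each level can be used \emph{globally}, so the total number of steps is at most $c\cdot\lceil\log(1+\decomp(P))\rceil$ for a small constant $c$.

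The main obstacle is ensuring that simultaneous activity in different chambers does not interfere: tiles still being processed at lower levels must not be sucked into merges higher up, and the "stuck in a corridor" issue handled at the end of Lemma~\ref{lem:ortho_convex} must be revisited for each merging chamber, since the merged subpolyomino has the same geometric subtleties. I would address this by assigning each internal node its own rectangular chamber bounded by obstacles that admit tiles only through the entrance channel connecting it to its parent, and by reapplying the case distinction of Lemma~\ref{lem:ortho_convex} (shifting an obstacle or inserting a blocking obstacle) locally in each chamber. Because each convex piece and each merged intermediate piece is still $x$-monotone along the relevant cut, the same case analysis carries through.

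For the pipelining statement, I would observe that each chamber performs the same constant-length movement cycle repeatedly, so after the initial $O(\lceil\log(1+\decomp(P))\rceil)$ warm-up the deepest chamber can release a fresh leaf once every $O(1)$ steps, and by induction every chamber at height $h$ outputs a completed sub-assembly every $O(1)$ steps. Hence the root chamber outputs a new copy of $P$ every $O(1)$ global movement cycles, giving amortized $O(1)$ production time, matching the pipelining behavior already demonstrated for the convex case in Fig.~\ref{fig:ortho_convex}.
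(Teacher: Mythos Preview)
Your proposal is correct and follows essentially the same approach as the paper: build a balanced decomposition tree of depth $O(\log\decomp(P))$ via the median cut, construct all convex leaves in parallel via Lemma~\ref{lem:ortho_convex}, and merge pairs level by level with a constant-length movement cycle, then pipeline. The paper's proof is terser than yours; the one point it makes more explicit is that the merging gadgets (its Figs.~\ref{fig:monotone:p1p2} and~\ref{fig:monotone:p2p1}) use \emph{exactly} the same five-move cycle as the convex construction of Lemma~\ref{lem:ortho_convex}, so no separate synchronization argument is needed---the global cycle simultaneously builds new leaves and advances every merge chamber one level, which is what you arrive at in your pipelining paragraph.
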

	
	\begin{proof}
		As a first step we search for the vertical cuts as described above.
		Having this subdivision into convex subpolyominoes, we can use Lemma~\ref{lem:ortho_convex} to create all subpolyominoes in parallel.
		We now can use the combining gadget seen in Fig.~\ref{fig:monotone:p1p2} and Fig.~\ref{fig:monotone:p2p1} to combine two adjacent subpolyominoes in each cycle.
		Thus, for each cycle the number of subpolyominoes decreases by a factor of two and we have at most $\lceil\log (1+\decomp(P))\rceil$ cycles to combine all subpolyominoes to obtain $P$. 
		
		As already described in Lemma~\ref{lem:ortho_convex}, we start a new copy after every cycle.
		Thus, to create $D$ copies of $P$ we need $O(\lceil\log (1+\decomp(P))\rceil + D)$ cycles.
		This is an amortized constant time per copy if we create $\Omega(\log\decomp(P))$ copies. Note that $\decomp(P)$ is in $\Omega(1)$ and $O(N)$.
	\end{proof}
	
	\begin{figure}[]
		\centering
		\subfigure[Up]{
				\includegraphics[scale=0.235]{./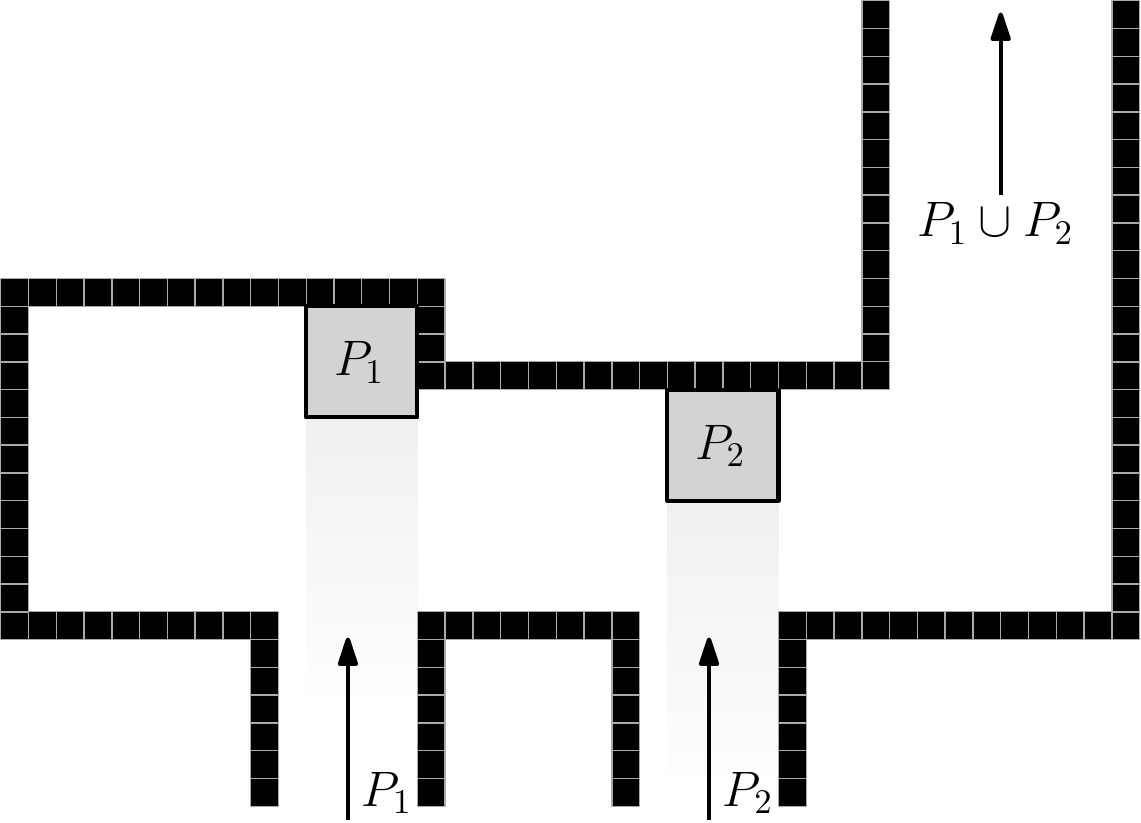}}
			\hfill
		\subfigure[Right]{
			\includegraphics[scale=0.235]{./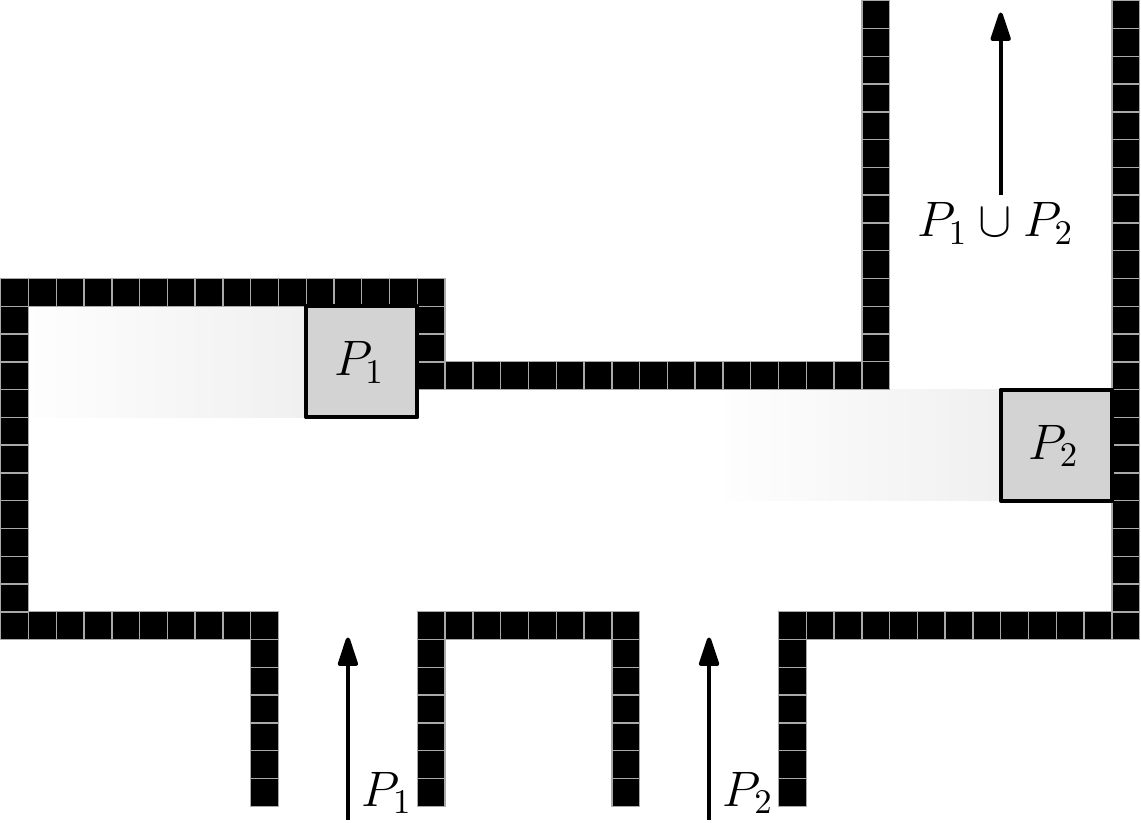}}
		\hfill
		\subfigure[Left]{
			\includegraphics[scale=0.235]{./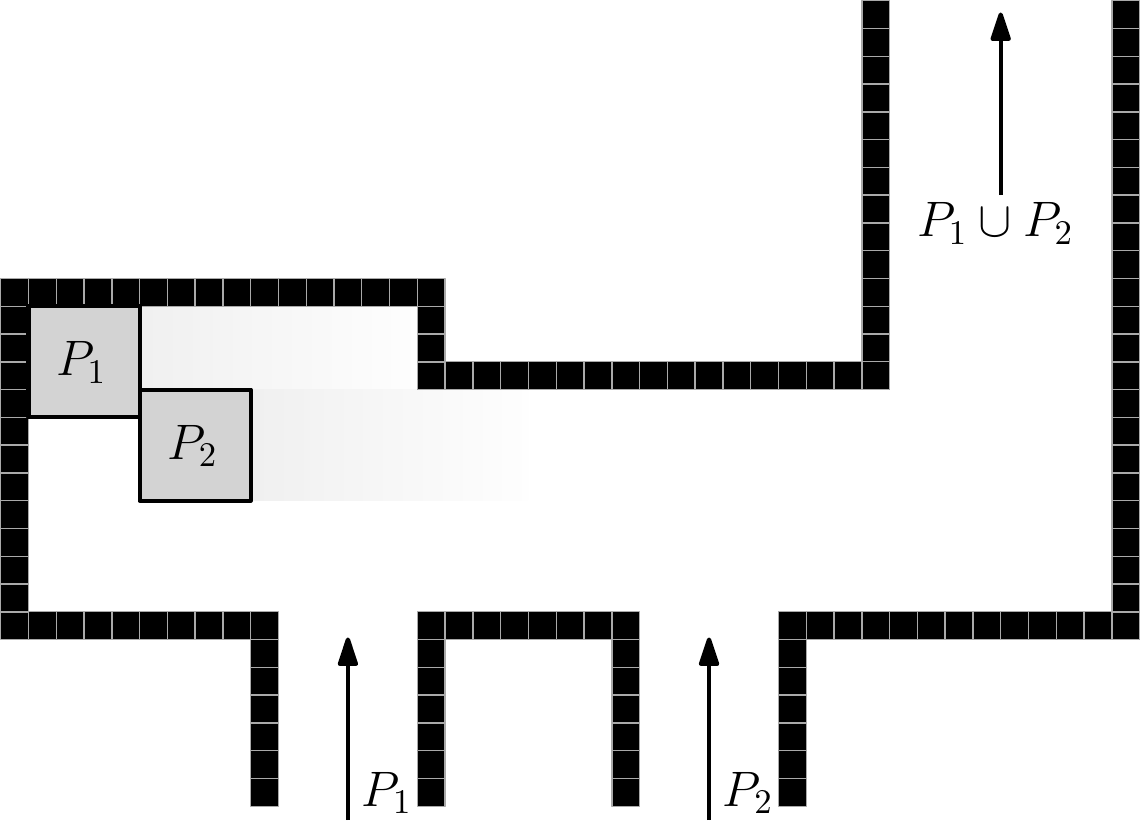}}
		\subfigure[Down]{
			\includegraphics[scale=0.235]{./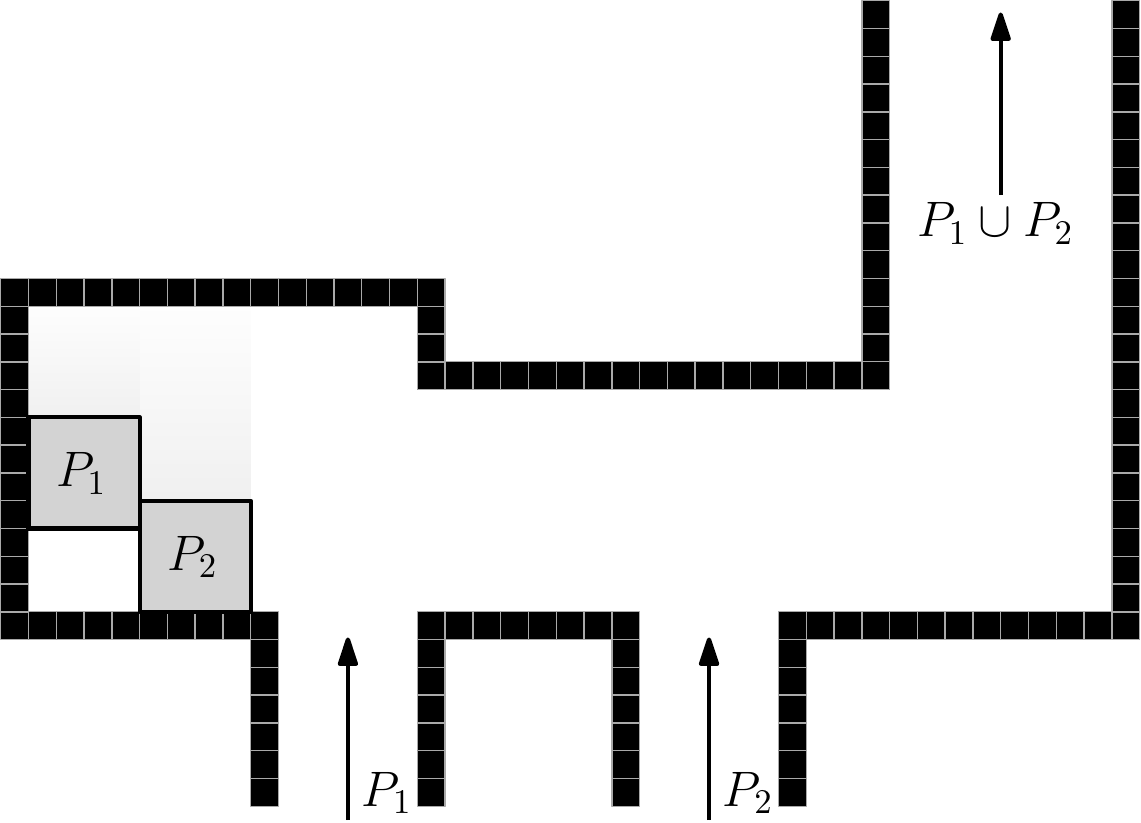}}
		\hfill
		\subfigure[Right]{
			\includegraphics[scale=0.235]{./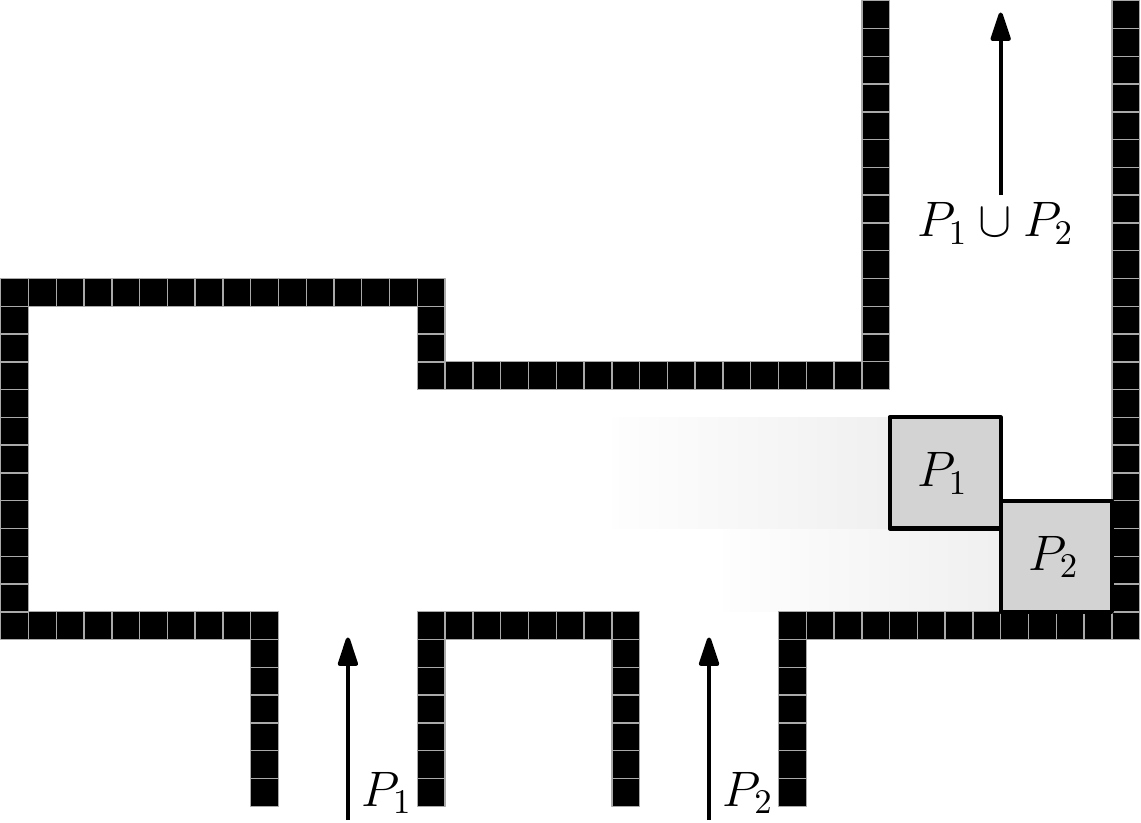}}
		\hfill
		\subfigure[Up]{
			\includegraphics[scale=0.24]{./figures/monotone_2_1.pdf}}
		\caption{Assembling two subpolyominoes $P_1$ and $P_2$, where the topmost tile of $P_1$ lies above the topmost tile of $P_2$. These are the same movements as seen in Fig.~\ref{fig:ortho_convex} \revision{for convex polyominoes. Thus, we can combine two subpolyominoes while constructing the next convex subpolyomino.}}
		\label{fig:monotone:p1p2}
		\vspace*{-4mm}
	\end{figure}
	
	\begin{figure}
		\centering
		\subfigure[Up]{
			\includegraphics[scale=0.26]{./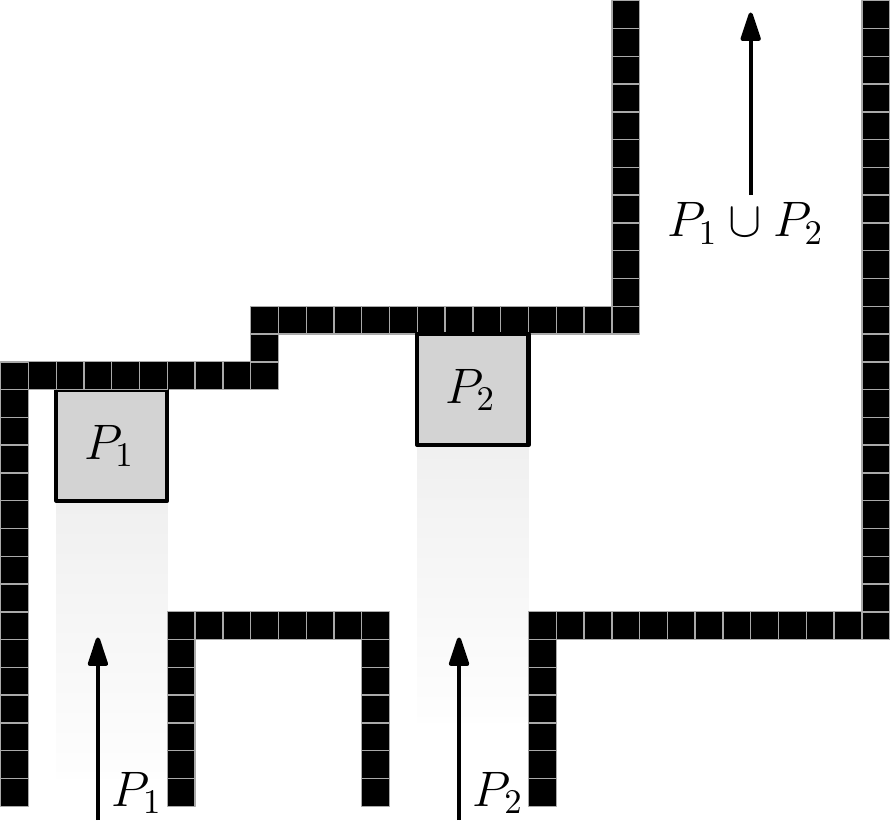}}\hfill
		\subfigure[Right]{
			\includegraphics[scale=0.26]{./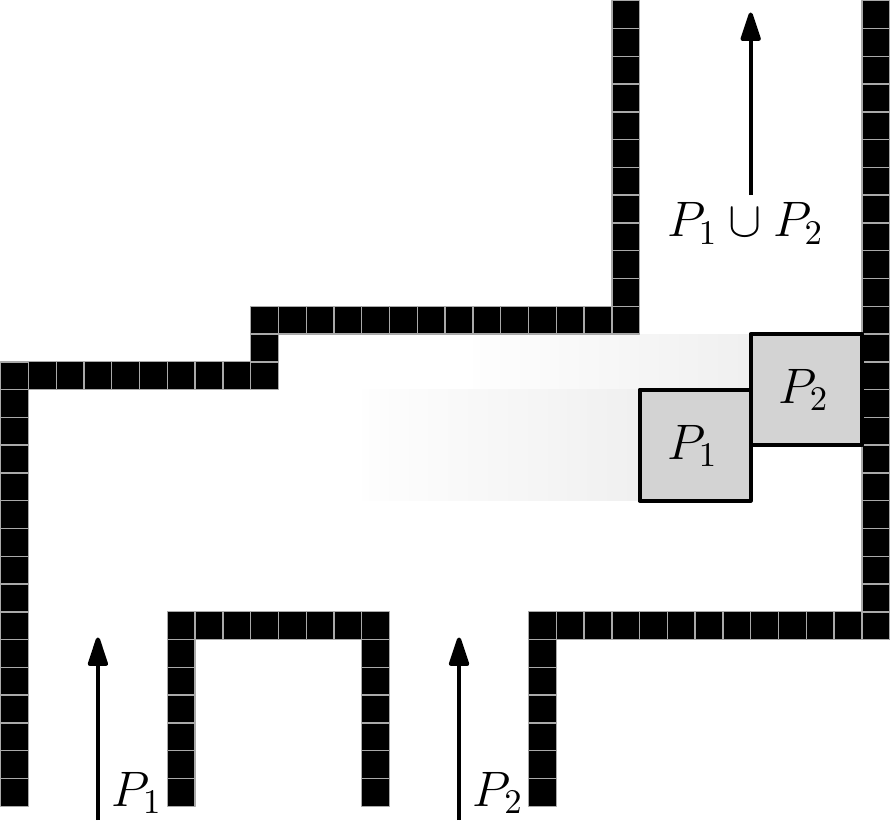}}	\hfill
		\subfigure[Left]{
			\includegraphics[scale=0.26]{./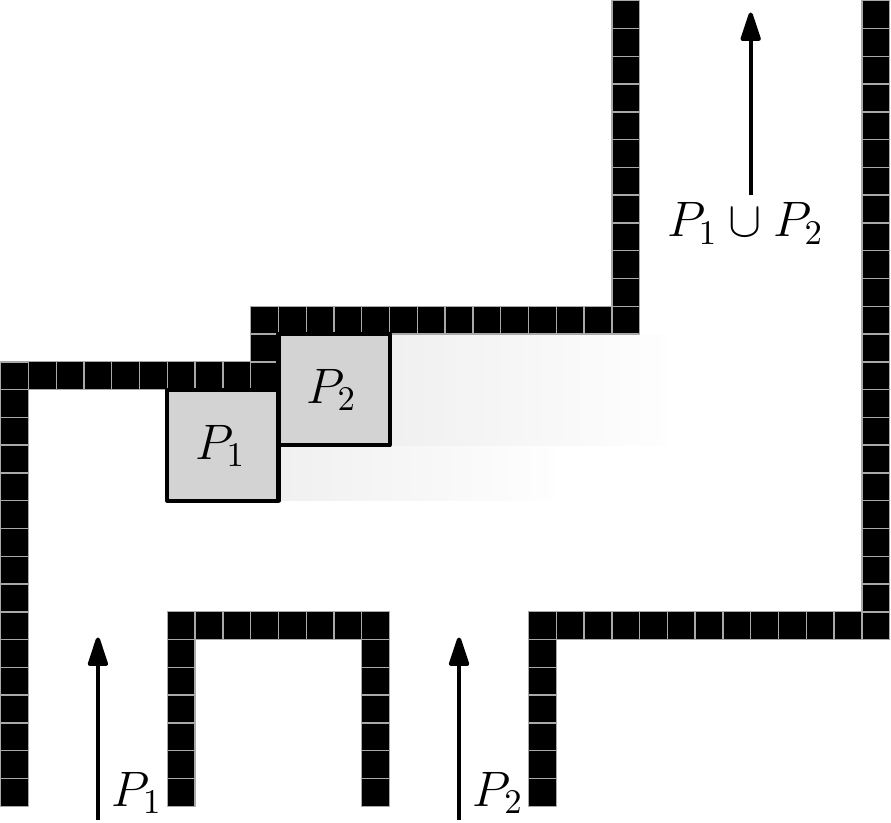}}	\hfill		
		\subfigure[Down]{
			\includegraphics[scale=0.26]{./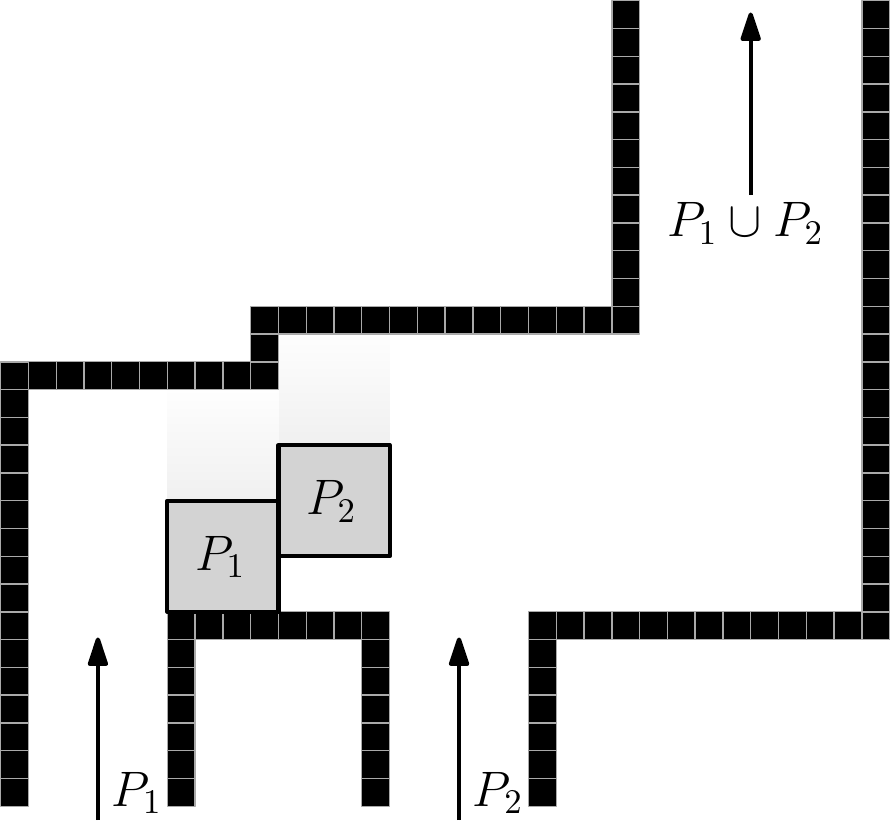}}\hfill
		\subfigure[Right]{
				\includegraphics[scale=0.26]{./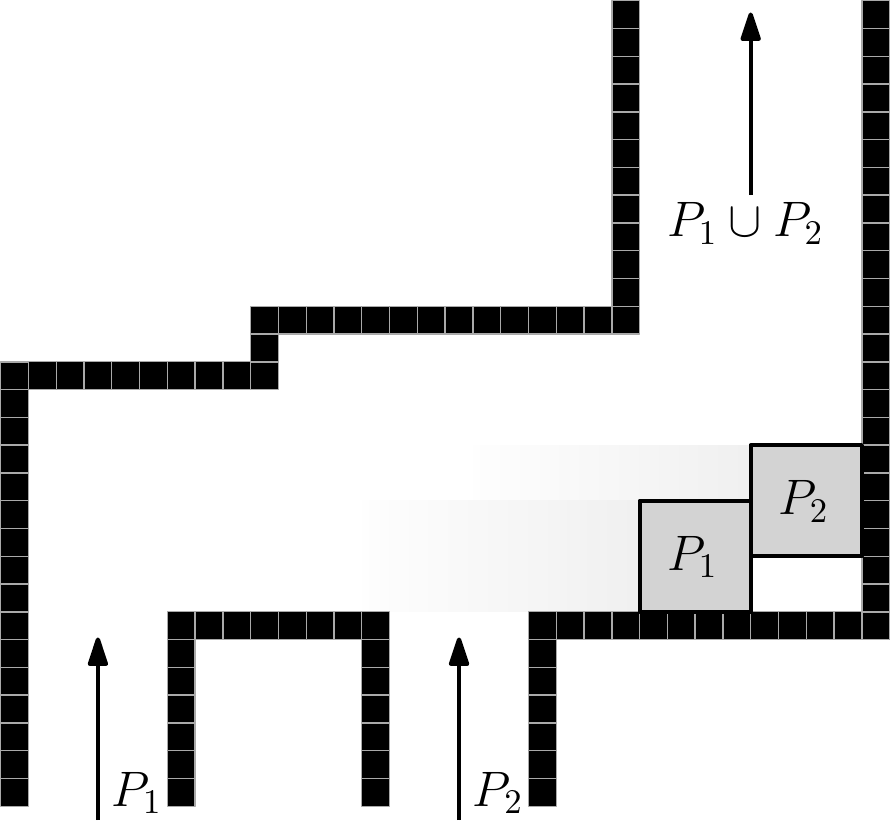}}\hfill	
		\subfigure[Up]{
				\includegraphics[scale=0.26]{./figures/monotone_1_1.pdf}}\hfill
		\caption{Assembling two subpolyominoes $P_1$ and $P_2$, where the topmost tile of $P_2$ lies above the topmost tile of $P_1$. These are the same movements as seen in Fig.~\ref{fig:ortho_convex} \revision{for convex polyominoes. Thus, we can combine two subpolyominoes while constructing the next convex subpolyomino.}}
		\label{fig:monotone:p2p1}
		\vspace*{-4mm}
	\end{figure}
		
	\begin{figure}
		\centering
		\includegraphics[angle=270, width=\columnwidth]{./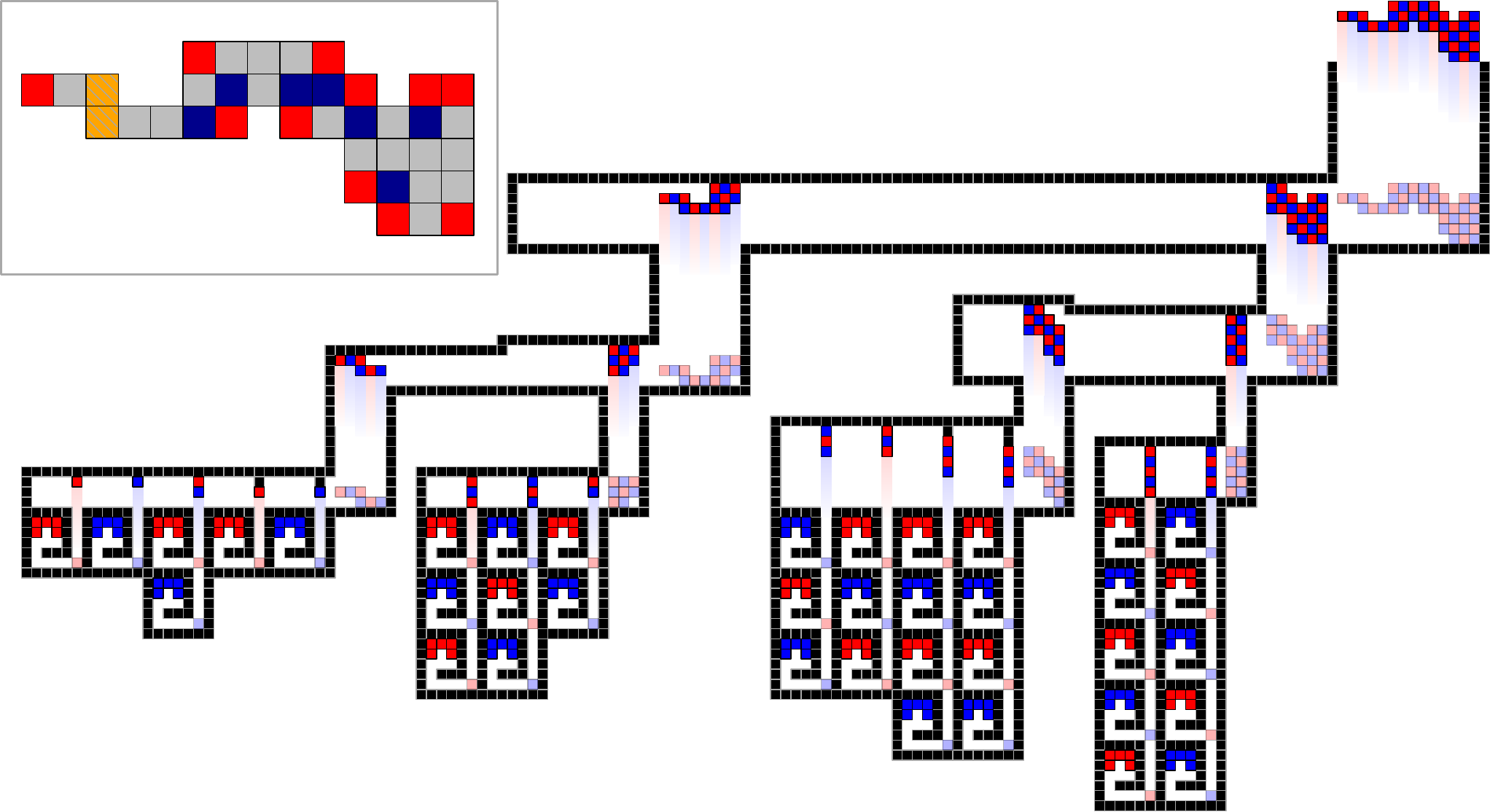}
		\caption{A complete example constructing $P$ with $\decomp(P)=3$. State shown is after an up-movement and its previous state in translucent colors. \revision{Top-left box: A polyomino $P$ with \revision{locally} convex tiles (red), \revision{locally} reflex tiles (blue), and tiles that are both \revision{locally} convex and \revision{locally} reflex (orange, striped).}}
		\label{fig:complete_monotone}
		\vspace*{-4.5mm}
	\end{figure}

	\section{Assembling Non-Monotone Shapes}
	In this section we show how to decide constructibility for special classes of polyominoes, namely simple polyominoes and polyominoes with convex holes.
	We end this section by showing how much space is needed for the workspace in which we can assemble the polyominoes.
	\vspace{-2mm}
	\subsection{Simple Polyominoes}
	To prove if a simple polyomino can be constructed we look at the converse process: a decomposition.
	As defined in the preliminary section we use 2-cuts to decompose a polyomino.
	If the polyomino cannot be decomposed by 2-cuts then the polyomino cannot be constructed by successively putting two subpolyominoes together.
	We show with the next lemma that we can greedily pick any valid straight 2-cut.
	
	\begin{lemma}
		Any valid straight 2-cut preserves decomposability.
	\end{lemma}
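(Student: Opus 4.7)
My plan is an exchange argument on a hypothetical decomposition sequence. Fix any valid straight 2-cut decomposition $\mathcal{D}=(C_1,\dots,C_k)$ that splits $P$ into monotone pieces, and let $C$ be the given valid straight 2-cut, producing subpolyominoes $P_1$ and $P_2$. The goal is to construct a straight 2-cut decomposition of $P_j$ for each $j\in\{1,2\}$ by reusing the cuts of $\mathcal{D}$ in suitably restricted form.

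The first step is to classify how $C$ interacts with each $C_i$. Because both cuts are straight, the components of $C_i\cap P$ are axis-aligned segments, so either $C_i$ falls entirely on one side of $C$, or $C$ and $C_i$ cross each other transversally. In either situation the two cuts commute: applying $C$ first and then $C_i$ restricted to whichever piece inherits it produces the same refinement of $P$ as the reverse order. The key subtlety is \emph{validity-preservation}. I expect this to be the main obstacle. The pull-apart directions witnessing validity of $C_i$ in $P$ must remain valid inside the subpiece left after $C$ is applied; this should follow because removing material on the other side of $C$ can only eliminate potential blockers, never introduce new ones. In addition, at a transverse crossing $P$ is split into four connected subregions, so the restriction of $C_i$ to each side of $C$ still separates that side into exactly two pieces, i.e., it remains a genuine 2-cut rather than collapsing into a trivial or disconnecting cut.

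By iterating these commutations I push $C$ to the head of $\mathcal{D}$. The remaining tail decouples into two independent valid straight 2-cut sequences acting on $P_1$ and $P_2$ respectively, whose final leaves are the leaves of $\mathcal{D}$, possibly further subdivided by the trace of $C$. Since any straight cut of a monotone polyomino preserves monotonicity (a horizontal or vertical line through an $x$- or $y$-monotone region yields monotone pieces in the same direction), each final leaf remains monotone. Hence both $P_1$ and $P_2$ are straight 2-cuttable, as required. The hardest part, as noted, will be making the validity-preservation argument fully rigorous and ruling out degenerate crossings where the restricted cut fails to be a straight 2-cut of its subpiece.
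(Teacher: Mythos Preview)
Your proposal is correct and uses essentially the same exchange argument as the paper: push the given cut to the front of a known decomposition sequence, relying on the fact that validity of a straight 2-cut is inherited by subpolyominoes (removing tiles cannot create new blockers).

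The one noteworthy difference is how the crossing case is handled. The paper takes the decomposition all the way down to single tiles rather than stopping at monotone leaves; this guarantees that every grid segment lying on $\ell$ eventually appears as a cut in the sequence, so when some $\ell_k$ crosses $\ell$, the two halves $\ell',\ell''$ of $\ell$ on either side of $\ell_k$ are already present later in the sequence and can simply be merged and swapped with $\ell_k$. This sidesteps precisely the point you identify as the hardest part of your argument---verifying that the restriction of $C_i$ to each side of $C$ is still a genuine 2-cut (in particular, that each of the four subregions is connected). Your route works too, but you will need the observation that a straight 2-cut of a simply connected polyomino meets it in a single segment (an Euler-characteristic count), from which the four-region connectivity follows; the paper's single-tile trick buys a way around that verification.
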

	
	\begin{proof}
		Consider a straight 2-cut $\ell$ and a sequence $\sigma=(\ell_1,\dots,\ell_m)$ of cuts, decomposing $P$ into single tiles.
		Assume $\ell$ is part of the cut sequence but not the first cut in $\sigma$.
		Then, there is a 2-cut $\ell'$ being made directly before $\ell$ in a polyomino $P^*$ induced by cuts before $\ell'$.
		We can now swap $\ell'$ and $\ell$ preserving their property of being 2-cuts: for $\ell$ we assume it is a 2-cut in $P$, which is also true in any subpolyomino induced by 2-cuts; 
		the same holds for $\ell'$, it is a 2-cut in $P^*$ and thus, also in any subpolyomino induced by 2-cuts. 
		After swapping both cuts we have the same decomposition yielding a valid decomposition of $P$.
		We can now repeat this procedure until $\ell$ is the first cut in $P$.
		
		However, $\ell$ may not be in the cut sequence $\sigma$.
		We now show that we can use $\ell$ as a cut by exchanging cuts.
		Let $\ell_k$ be the last cut intersecting $\ell$.
		This cut separates two cuts $\ell'$ and $\ell''$ which lie on $\ell$.
		Because $\ell$ is a 2-cut, also $\ell'\cup\ell''$ must be a 2-cut in the polyomino where we use cut $\ell_k$.
		Therefore, we can first use the cut $\ell'\cup \ell''$ and then the two cuts $\ell'_k$ and $\ell''_k$ induced by the intersection of $\ell$ and $\ell_k$.
		By repeating this procedure, we get $\ell$ as part of the cut sequence $\sigma$. 
	\end{proof}
		\begin{definition}
			A tile $t$ of a polyomino $P$ is said to be \emph{locally convex} if there exists a $2\times 2$ square solely containing $t$.
			If the square only contains $t$ and its two neighbors, then we call $t$ \emph{locally reflex}.
			Note that a tile can be \revision{locally} convex and \revision{locally} reflex at the same time (see Box in  Fig.~\ref{fig:complete_monotone}).
		\end{definition}
	
	\begin{lemma}\label{lem:reflex_only}
		\revision{Any non-convex, straight 2-cuttable polyomino $P$ can be decomposed into convex subpolyominoes by only using straight 2-cuts cutting along a \revision{locally} reflex tile.}
	\end{lemma}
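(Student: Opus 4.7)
The strategy is to combine the preceding lemma (any valid straight 2-cut preserves decomposability) with a single-step existence claim, and then induct on the number of cuts. The single-step claim I need is: \emph{if $P$ is non-convex and straight 2-cuttable, then $P$ admits at least one valid straight 2-cut that runs along the side of a locally reflex tile.} Given this, the overall lemma follows by repeatedly applying the preceding lemma: each non-convex intermediate subpolyomino is still straight 2-cuttable, so I always find a reflex-aligned cut at every level until all pieces are convex.

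To establish the single-step claim, I start from some valid straight 2-cut $\ell$ of $P$, which exists by the 2-cuttability assumption. Without loss of generality $\ell$ is vertical. If $\ell$ already runs along a locally reflex tile we are done, so assume not. Since $P$ is non-convex, it contains at least one locally reflex tile $r$; I pick the $r$ whose column is closest to $\ell$, lying (say) to the right of $\ell$. I then translate $\ell$ rightward one column at a time, obtaining a family of vertical segments $\ell_0=\ell,\ell_1,\ell_2,\dots$, each taken as long as possible inside $P$.

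The technical heart of the argument is to show that each $\ell_i$, up to and including the one that first reaches $r$'s column, is still a valid straight 2-cut. Because no locally reflex tile of $P$ lies strictly between $\ell_0$ and $\ell_i$ (by minimality of $r$), the slab of $P$ swept during the translation is a rectangular piece of $P$. Hence as $\ell$ sweeps through this slab, the left subpolyomino grows by a single column of tiles at each step and the right subpolyomino shrinks by the same column, with both remaining connected; moreover the pulling-apart direction (horizontal) is unchanged, so validity is preserved. When the translated cut finally reaches $r$'s column, one of its endpoints coincides with the concave corner at $r$, which is exactly the condition of cutting along a locally reflex tile.

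The main obstacle will be the boundary behavior at the two endpoints of $\ell_i$ during the sweep. I would handle it by a small case analysis: at each endpoint the boundary of $P$ locally meets $\ell_i$ either at a locally convex corner, along a straight boundary segment, or at a locally reflex corner; the first two cases are exactly the situations in which sliding by one column preserves the cut (the endpoint simply slides along a straight portion of the boundary or past a locally convex corner), while the third case means we have already reached the desired reflex-aligned configuration and can stop. Once the single-step claim is established in this way, the induction mentioned above completes the proof of the lemma.
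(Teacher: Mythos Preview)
Your strategy matches the paper's: both start from an arbitrary valid straight 2-cut $\ell$ and translate it toward a locally reflex tile, arguing that the shifted cut is still a valid straight 2-cut; the induction via the preceding lemma is also how the paper closes the argument.

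Where your sketch falls short is precisely the technical core. First, the ``rectangular slab'' assertion: you claim that the absence of reflex tiles between $\ell$ and $r$ forces the swept region to be a rectangle. This is essentially correct (any step in column height between two adjacent columns produces a locally reflex tile in one of those columns), but it needs to be stated and verified rather than asserted, and the boundary columns require care. Second, and more importantly, your validity argument (``the pulling-apart direction is unchanged, so validity is preserved'') is not a proof. Validity means no tile of one piece is horizontally trapped between two tiles of the other; you must argue that transferring the slab from $P_2$ to $P_1$ cannot create such a configuration, and that both pieces remain connected. The paper handles this differently and more robustly: rather than arguing step by step, it shifts $\ell$ in one move to a reflex-aligned position $\ell'$, assumes for contradiction that $\ell'$ is invalid, takes a blocked tile together with its two blocking tiles, and \emph{propagates} those blockers horizontally across the intermediate slab $P_3$ until they both lie on the far side of $\ell$, showing that $\ell$ itself would already have a blocked tile. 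That propagation argument is the substance of the lemma, and your outline does not supply an equivalent.
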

	
	\begin{proof}
		W.l.o.g., consider a vertical straight 2-cut $\ell$ that may not cut along a \revision{locally} reflex tile.
		Then we can move a cut $\ell'$ to the left starting at $\ell$ until we reach a \revision{locally} reflex tile \revision{$t$ such that the cut goes through the corner of $t$ that lies on the boundary of $P$} (if we cannot reach a \revision{locally} reflex tile we move $\ell'$ to the right).
		We obtain three subpolyominoes: $P_1$ to the right of $\ell$, $P_2$ to the left of $\ell'$, and $P_3$ between $\ell'$ and $\ell$ (see Fig.~\ref{fig:2-cut:reflex}).
		
		Assume $\ell'$ is not a valid 2-cut, i.e., a tile is blocked in $P_2$ or $P_3$. (If there is a blocked tile in $P_1$, then also $\ell$ would not be a 2-cut.)
		Consider the first case, where $P_2$ has a blocked tile $t$.
		Then, $t$ has an $y$-coordinate which is at most as high as the highest tile in $P_3$ plus 1 and at least as high as the lowest tile in $P_3$ minus one (or else both blocking tiles must be in $P_1$ and thus, $\ell$ would be no 2-cut).
		Let $q_1\in P_3$ to the right of $t$ and $q_2\in P_1\cup P_3$ to the left of $t$ be the two tiles blocking $t$. 
		By replacing $q_1$ with its right neighbor we still have two tiles blocking $t$.
		Because $\ell$ is a 2-cut we can repeat this procedure  until $q_1\in P_1$.
		We can repeat the procedure for $q_2$ if $q_2\in P_3$.
		Thus, both blocking tiles are in $P_1$ and $\ell$ cannot be a 2-cut.
		
		For the second case the blocked tile $t$ lies in $P_3$.
		Then, also the right neighbor $t'$ of $t$ is blocked. 
		This is also true for $t'$.
		Therefore, we can go to the right until we reach $P_1$ and thus, there is a tile in $P_1$ which is blocked.		
		This means, also $\ell$ cannot be a valid 2-cut, which is a contradiction to $\ell$ being a valid 2-cut.
		
		\revision{As each cut $\ell'$ reduces the number of \revision{locally} reflex tiles by at least one, the remaining polyominoes will be convex after a limited number of cuts.}
	\end{proof}
	
	\begin{figure}
		\begin{varwidth}[t]{0.4\columnwidth}
		\includegraphics[width=\columnwidth]{./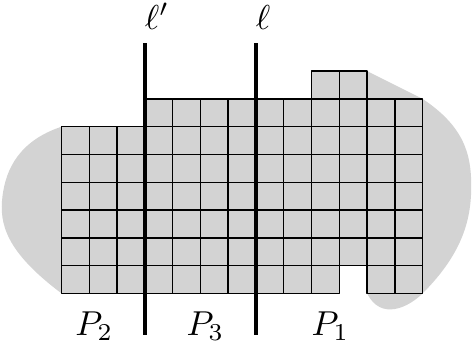}
		\caption{\revision{The original cut $\ell$ and its shifted copy $\ell'$, which together  split the polyomino into three parts $P_1,P_2,P_3$.}}
		\label{fig:2-cut:reflex}
		\end{varwidth}
		\hfill
		\begin{varwidth}[t]{0.5\columnwidth}
				\includegraphics[width=0.9\columnwidth]{./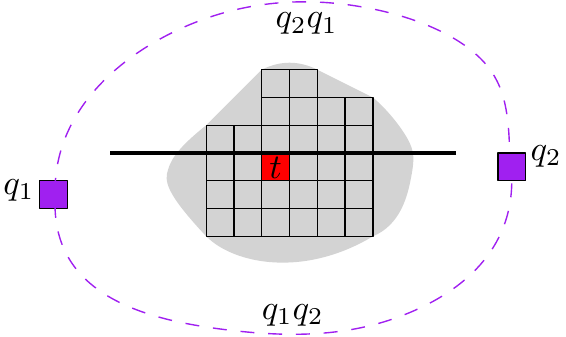}
				\caption{A not \revision{locally} convex tile $t$ (red) in $P_1$ (gray area) blocked by $q_1$ and $q_2$ (purple).
					If the path $q_2q_1$  exists, there is at least one blocked \revision{locally} convex tile above the black bold line. \revision{If $q_1q_2$ exists, we proceed analogously.}}
				\label{fig:2-cut:convex_tiles}
		\end{varwidth}
		\vspace*{-5mm}
		\end{figure}

	\begin{lemma}\label{lem:convex}
		It is sufficient to consider \revision{locally} convex tiles for checking if a cut $\ell$ is a valid straight 2-cut.
	\end{lemma}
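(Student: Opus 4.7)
The plan is to establish the claim by contraposition: if a straight 2-cut $\ell$ is \emph{not} valid, then a locally convex tile already witnesses this failure. Consequently, to verify validity it suffices to inspect only locally convex tiles. Without loss of generality, assume $\ell$ is a vertical straight 2-cut splitting $P$ into a left part $P_1$ and a right part $P_2$, and that the pull-apart separation is horizontal; invalidity then means some tile is blocked by tiles of the opposite subpolyomino during separation, exactly as analyzed in the proof of Lemma~\ref{lem:reflex_only}.

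First, I would fix notation following Lemma~\ref{lem:reflex_only}: let $t$ be a blocked tile, say $t\in P_1$, with a pair of tiles $q_1,q_2\in P_2$ on both horizontal sides of $t$ that together obstruct the separation, as depicted in Fig.~\ref{fig:2-cut:convex_tiles}. If $t$ is already locally convex, we are done, so suppose $t$ is not. Then in every $2\times 2$ square containing $t$ some additional tile of $P$ beyond the horizontal row-neighbors is present, and hence $t$ has a $P_1$-neighbor directly above or below it. The plan is to walk from $t$ along its contiguous column of $P_1$-neighbors upward until reaching the topmost tile $t^\star$ of that column; by construction $t^\star$ has no $P_1$-tile directly above it, which forces $t^\star$ to be locally convex at (at least) one of its upper corners.

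The key step is to show that this $t^\star$ is itself blocked. Since $q_1,q_2\in P_2$ and $P_2$ is connected, there is a path in $P_2$ from $q_1$ to $q_2$; this path cannot cross the $P_1$-column we walked along, so it must go around it, passing either above the row of $t^\star$ or (symmetrically) below the row of $t$. In the first case, the path crosses the row of $t^\star$ on both sides, producing a pair of $P_2$-tiles that obstruct the separation of $t^\star$; in the second case we simply perform the symmetric walk downward from $t$ and obtain a locally convex witness on the lower side instead. Either way, the blocking situation transfers from the non-locally-convex $t$ to a locally convex tile, so a validity check that ignores non-locally-convex tiles cannot miss any obstruction.

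The main obstacle is making the walking argument rigorous: in particular, showing that the $P_2$-path between $q_1$ and $q_2$ must actually reach the row of $t^\star$ (or, after switching, the row of the corresponding bottom tile), and that the ``shifted'' blocking pair still sits on both sides of that locally convex tile. This relies on the \emph{straightness} of $\ell$ inside $P$, which prevents $P_2$ from slipping into $P_1$'s column in a way that would let it bypass the locally convex corners of $P_1$ while still obstructing interior tiles. Once this invariant is maintained along the walk, iterating the transfer argument shows that every blocking is witnessed by a locally convex tile, completing the proof.
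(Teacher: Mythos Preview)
Your overall strategy---contraposition, using connectivity of $P_2$ to get a path from $q_1$ to $q_2$ that must pass above or below the obstruction---matches the paper exactly. The difference is in how you locate a locally convex witness, and there your argument has a real gap.

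You claim that walking up $t$'s column in $P_1$ to the topmost tile $t^\star$ yields a locally convex tile, because $t^\star$ has no $P_1$-tile directly above it. That inference is false: having nothing above does not rule out $P_1$-neighbours to the left and right of $t^\star$. Concretely, if $P_1$ looks (locally) like a \textsf{T} with $t$ at the bottom of the stem and $t^\star$ the tile where the stem meets the bar, then $t^\star$ has no tile above but both upper $2\times 2$ squares also contain the adjacent bar tile, so $t^\star$ is not locally convex. Your walk terminates at the wrong tile; the actual locally convex witnesses (the ends of the bar) lie off to the side, and nothing in your argument reaches them. The later remark about straightness of $\ell$ does not help here, since the obstruction is entirely inside $P_1$.

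The paper avoids this by a different device: it makes a horizontal auxiliary cut through $P_1$ just above $t$, looks at the components this creates, and uses the fact that every simple orthogonal region has at least four convex boundary corners while the auxiliary cut can create at most two new ones---so at least two convex corners (hence locally convex tiles) of each component were already locally convex in $P_1$, and those lie inside the pocket formed by the $q_1$--$q_2$ path, hence are blocked. If you want to rescue your walking approach, you would need to continue moving from $t^\star$ along the top boundary of $P_1$ (not just up a single column) until you reach a genuine convex corner, and argue that you stay inside the region enclosed by the $q_1$--$q_2$ path; but at that point you are essentially reproducing the paper's counting argument in a less direct form.
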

	
	\begin{proof}
		Assume w.l.o.g. $\ell$ is a vertical cut splitting the polyomino in two subpolyominoes $P_1$ and $P_2$.
		W.l.o.g., consider a not \revision{locally} convex tile $t\in P_1$ blocked by two tiles $q_1,q_2 \in P_2$. 
		Because $\ell$ is a 2-cut and $P$ is simple, there must be a path from $q_1$ to $q_2$ within $P_2$.
		This path must go around $P_1$ either above or beneath $t$ (see Fig.~\ref{fig:2-cut:convex_tiles}).
		
		In case the path moves above $t$, consider a horizontal cut directly above $t$ (see Fig.~\ref{fig:2-cut:convex_tiles}).\todo{Figure 8 caption: The last sentence seems incomplete.}
		This cut splits $P_1$ into components.
		In each component there are at least four \revision{locally} convex tiles from which at most two became \revision{locally} convex through the cut.
		Thus, two of these \revision{locally} convex tiles were also \revision{locally} convex in $P_1$.
		It is easy to see in the figure that both \revision{locally} convex tiles are also blocked by tiles on the path from $q_1$ to~$q_2$.
		
		In the second case we proceed analogously with the difference that we use a horizontal cut directly below $t$.
		We conclude that in any case there is a \revision{locally} convex tile in $P_1$ that is being blocked if there is a blocked, not \revision{locally} convex tile.
		Note that the other direction may not be true.
	\end{proof}
	
\rem{	\begin{figure}
		\centering
		\includegraphics[width=0.5\columnwidth]{./figures/2-cut-convex_tiles.pdf}
		\caption{A not \revision{locally} convex tile $t$ (red) in $P_1$ (gray areay) blocked by $q_1$ and $q_2$ (purple).
			If there is a path $q_2q_1$ then there is at least one \revision{locally} convex tile above the black bold line that is also blocked. Analogously, if $q_1q_2$ exists.}
		\label{fig:2-cut:convex_tiles}
		\vspace*{-6mm}
	\end{figure}
	}

	\begin{lemma}\label{lem:sweep}
		Checking if a 2-cut $\ell$ is valid can be done in $O(N+r\log r)$ time, where $r$ is the number of \revision{locally} reflex tiles.
	\end{lemma}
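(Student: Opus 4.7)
The plan is to exploit Lemma~\ref{lem:convex}, which collapses the validity check to only the locally convex tiles. For a simple polyomino the identity (convex boundary corners) minus (reflex boundary corners) equals $4$, so the number of locally convex tiles is $r+O(1)$. Because any blocking event must involve a locally convex tile on one side being obstructed by a tile on the other side, the whole validity test reduces to $O(r)$ queries about the row (or column) structure of $P$.

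First I would perform a single linear sweep of $P$ in $O(N)$ time. During this pass every tile is classified as locally convex or locally reflex and, using the position of $\ell$, assigned to $P_1$ or $P_2$. In the same pass I would, for every row $y$ of $P$, record the $x$-coordinates of the interfaces between $P_1$- and $P_2$-runs within $y$, together with the leftmost/rightmost $P_j$-tile between consecutive interfaces. Only the rows that contain at least one locally convex tile need to be queried later.

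Second, I would process the $O(r)$ locally convex tiles to test whether any of them is blocked under either of the two admissible pull orientations (for a vertical cut: $P_1$ left and $P_2$ right, or the reverse). A convex tile $t\in P_1$ at $(x_t,y_t)$ is blocked under a given orientation exactly when some $P_2$-tile in row $y_t$ lies strictly on the wrong side of $t$ with respect to the interface governing that row. After a global $O(r\log r)$ sort of the relevant row indices and their interfaces, each such test locates the pertinent interface by binary search in $O(\log r)$ time, so the $O(r)$ tests together cost $O(r\log r)$. The cut $\ell$ is valid iff some orientation produces no blocked convex tile.

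The main obstacle is that a straight vertical $2$-cut need not be a single vertical line: it may consist of several disjoint vertical segments at different $x$-coordinates, so within a single row $P_1$ and $P_2$ can interleave, and the ``wrong side'' condition must be interpreted relative to the interface separating the convex tile's row. It is precisely this complication that rules out a flat $O(1)$ per-tile lookup and forces the $O(\log r)$ binary search (locating the correct interface among the $O(r)$ candidates), yielding the final bound of $O(N+r\log r)$.
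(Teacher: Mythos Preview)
Your high-level plan (invoke Lemma~\ref{lem:convex}, note that the number of locally convex tiles is $r+4=O(r)$, do an $O(N)$ scan, then answer $O(r)$ queries in $O(\log r)$ each) is sound, but the route is not the paper's, and two points in your write-up need attention.

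\textbf{What the paper does instead.} After the $O(N)$ scan that records all corner tiles together with their neighbouring corner tiles, the paper does \emph{not} look at rows at all. It takes the $O(r)$ vertical boundary segments of $P_2$ (each determined by two corner tiles), sorts their endpoints, and runs a bottom-to-top sweep that produces two monotone staircases $C_r$ and $C_l$: the right and left envelopes of $P_2$. For each locally convex tile $t\in P_1$ it then locates, by binary search in $C_r$ and $C_l$, the envelope segments straddling $t$'s height and tests whether $t$ lies simultaneously left of the $C_r$-segment and right of the $C_l$-segment; if so, $t$ is blocked on both sides and the routine returns ``invalid''. The whole sweep and all queries run in $O(r\log r)$.

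\textbf{Two gaps in your version.} First, you use Lemma~\ref{lem:convex} per orientation: you declare an orientation feasible when no convex tile is one-sidedly blocked in that orientation. But Lemma~\ref{lem:convex} is stated and proved only for the \emph{two-sided} notion of blocking (a tile with $P_2$-material on both its left and its right). It does not immediately say that whenever some tile has $P_2$ to its left, already a convex tile has $P_2$ to its left. The paper sidesteps this by testing the two-sided condition directly; if you want to test orientations separately you owe a short argument that the one-sided analogue holds. Second, your closing paragraph is based on a misconception for this section: in a \emph{simple} polyomino the common boundary of the two connected pieces is a single arc, so a straight vertical $2$-cut meets $P$ in one vertical segment, not several at different $x$-coordinates. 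The multi-segment situation only arises once convex holes are allowed (and there the paper does pay an extra factor, cf.\ Theorem~\ref{th:simple:non-straight} and the remarks after it). Consequently your per-row ``interfaces'' are far simpler than you suggest, and in fact storing for every row the minimum and maximum $x$ of $P_2$-tiles already lets you answer each query in $O(1)$ without any binary search.
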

	
	\begin{proof}
		W.l.o.g. assume $\ell$ to be a vertical straight cut and also assume that we are checking blue tiles only.
		As a first step we scan through the polyomino and search for all tiles that represent a corner, i.e., the tile is \revision{locally} convex or \revision{locally} reflex.
		Additionally, we can store the neighbor corner tiles of each corner tile (these are up to four tiles).
		Both steps can be done with one scan, and thus in $O(N)$ time.
		
		Now, consider the cut $\ell$ splitting the polyomino into subpolyominoes $P_1$ and $P_2$.
		Finding the corner tiles in $P_1$ and $P_2$ can be done in $O(r)$ time by a breadth-first search.
		We proceed with the following procedure for $P_1$ (analogously for $P_2$):
		\begin{enumerate}
			\item\label{enum:sweep:vert} Get all vertical lines connecting two corner tiles in $P_2$ and stretch this line by one tile if a corner tile is red (this checks if a blue tile would pass a red tile).
			\item Sort the set $C_r$ of corner tiles in $P_2$ lexicographically by $y$-coordinate and then by $x$-coordinate.
			\item Start a sweep line from bottom to top having the tiles in $C_r$ as event points.
			\item \label{enum:sweep:sweep} On each event point $p$ do the following update:
			\begin{itemize}
				\item If $p$ is a start point of a vertical line but lies left of the current vertical line, remove $p$ from $C_r$.
				\item If $p$ is a start point and lies to the right of the current line add the tile of the current line to $C_r$ and jump to the new vertical line.
				
				\item If $p$ is an end point of the current vertical line, then jump to the nearest vertical line to the left and add the tile of this line to $C_r$.
				\item If $p$ is an end point but not of the current vertical line, remove $p$ from $C_r$.
			\end{itemize} 
			\item Repeat steps 1--4, switching left and right, to get $C_l$ 
			\item \label{enum:sweep:check} For each \revision{locally} convex tile $t$ in $P_1$:
			\begin{itemize}
				\item find $q_1\in C_r$ having highest $y$-coordinate below $t$, and $q_2\in C_r$ having lowest $y$-coordinate above $t$. (Both shall be the left-most tile in case of ties.)
				\item find $q'_1\in C_l$ having highest $y$-coordinate below $t$, and $q'_2\in C_l$ having lowest $y$-coordinate above $t$. (Both shall be the left-most tile in case of ties.)
				\item If $t$ lies to the left of segment $q_1q_2$ and to the right of segment $q'_1q'_2$ return false.
			\end{itemize} 
		\end{enumerate}
		
		\begin{figure*}
			\centering
			\subfigure[$t_1$ is blocked to the right because it lies to the left of the red line. $t_2$ and $t_3$ are on the right side and there is no segment blocking them to the right. The leftmost segment belongs to bounding box of $P_2$.]{\hspace{1cm}
				\includegraphics[width=0.6\columnwidth]{./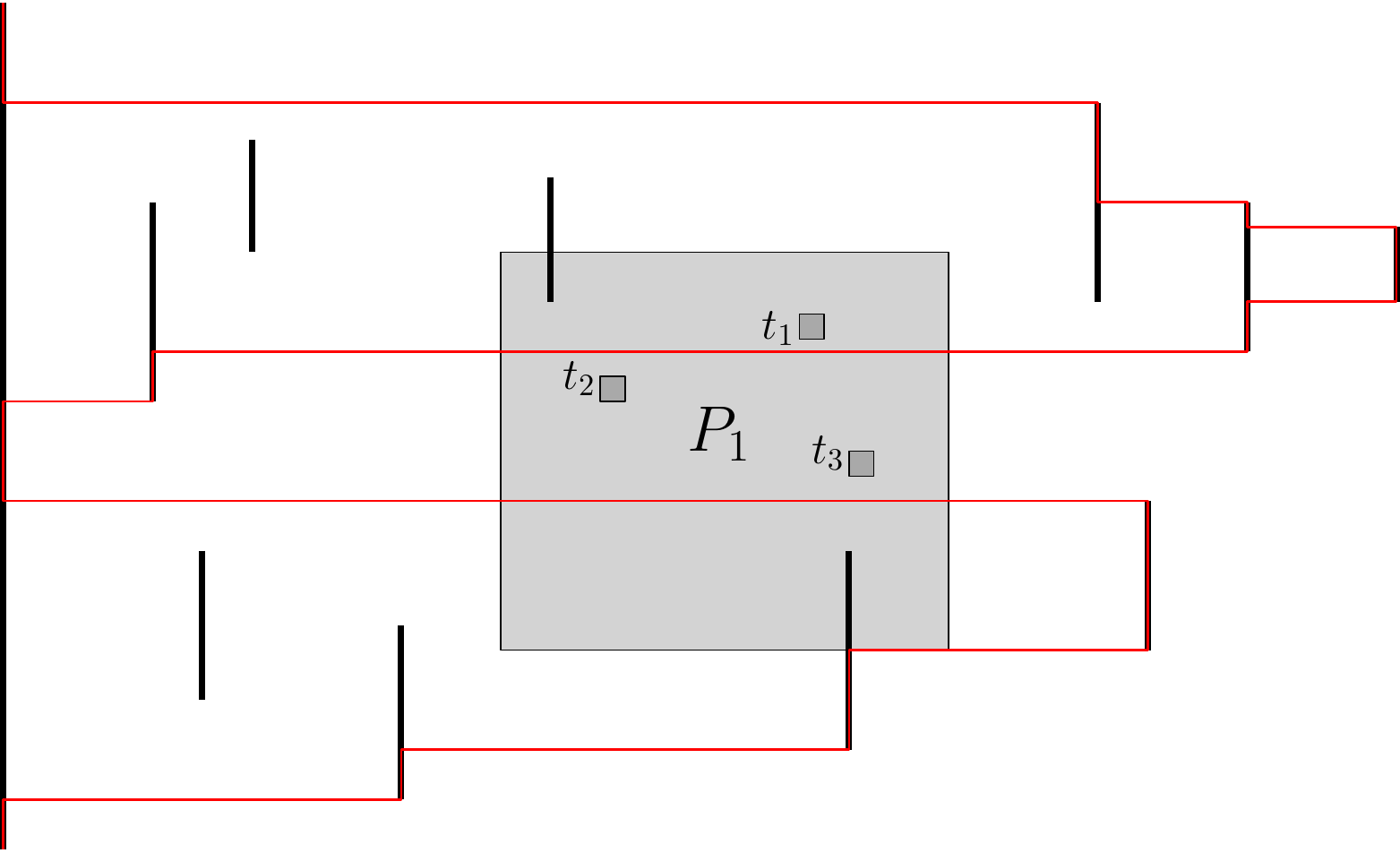}\hspace{1cm}}
			\hfil
			\subfigure[$t_1$ and $t_2$ are blocked to the left because they lie to the right of the red line. Only $t_3$ is on the left side and there is no segment blocking it to the left. The rightmost segment belongs to bounding box of $P_2$.]{\hspace{1cm}
				\includegraphics[width=0.6\columnwidth]{./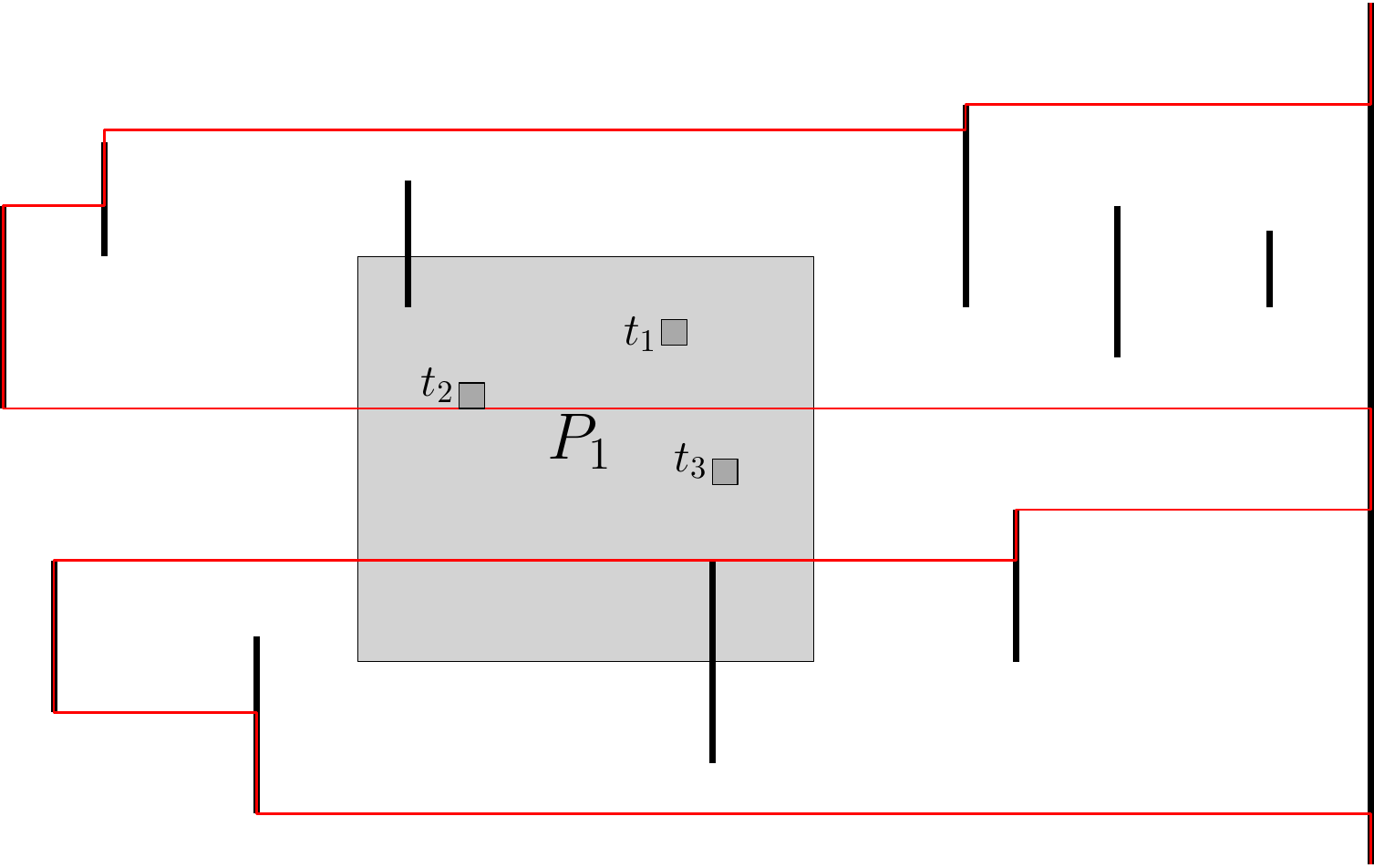}\hspace{1cm}}
			\caption{Example for the data structure used in Lemma~\ref{lem:sweep}. We observe that $t_1$ is always on the wrong side of the red line and is thus blocked in both directions. Vertical lines are part of $P_2$.}
			\label{fig:datastructure}
			\vspace{-0.4cm}
		\end{figure*}
		
		This computes a left and right envelope of vertical lines in $P_1$ and $P_2$, respectively. 
		This allows an easy check if there is a tile on the left/right blocking a tile from $P_1$ in this direction (for an example, see Fig.~\ref{fig:datastructure}). 
		
		The runtime is in $O(r\log r)$:
		Step~\ref{enum:sweep:vert} needs $O(r)$ time because there are $O(r)$ corner tiles and at most two vertical lines per corner tile.
		Sorting a set lexicographically in two dimensions can be done in $O(r\log r)$.
		With a careful view on step~\ref{enum:sweep:sweep}, we can observe that each update of the $O(r)$ event points costs $O(\log r)$ and thus in total $O(r\log r)$ time.
		Step~\ref{enum:sweep:check} can be done in $O(\log r)$ time for each \revision{locally} convex tile.
		Therefore, we need $O(r\log r)$ time in total.
	\end{proof}
	
	\revision{The next theorem is straightforward to prove.} 
	
	\begin{theorem}\label{th:simple:straight}
		Let $r$ be the number of \revision{locally} reflex tiles. We can find a valid straight 2-cut in $O(N+r^2\log r)$ time.
	\end{theorem}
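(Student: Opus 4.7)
My plan is to enumerate $O(r)$ candidate straight cuts and test each for validity via Lemma~\ref{lem:sweep}, while sharing the $O(N)$ preprocessing of that lemma across all the checks so that it is paid only once.

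First I would execute the preprocessing that appears inside the proof of Lemma~\ref{lem:sweep}: scan $P$ in $O(N)$ time to list every locally convex and locally reflex tile, and record, for each corner tile, pointers to its (at most four) neighboring corner tiles. This data depends only on $P$, not on any particular cut, so it is computed a single time up front.

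Next, by Lemma~\ref{lem:reflex_only}, if $P$ admits any valid straight 2-cut then it admits one that passes through the reflex corner of some locally reflex tile. Each such corner is aligned with at most two straight cuts (one horizontal and one vertical through that corner), so the candidate set has size at most $2r = O(r)$. For each candidate $\ell$ I invoke the validity check of Lemma~\ref{lem:sweep} but skip the already-completed $O(N)$ preprocessing: a BFS on the corner-tile graph classifies the $O(r)$ corner tiles into $P_1$ and $P_2$ in $O(r)$ time, the lexicographic sort runs in $O(r\log r)$, and the sweep with $O(r)$ events performs $O(\log r)$ work per event. Thus each candidate costs $O(r\log r)$, and summing over the $O(r)$ candidates gives $O(r^2\log r)$ for the entire check phase. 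If any candidate passes, return it; otherwise $P$ has no valid straight 2-cut.

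Combining the one-time $O(N)$ preprocessing with the $O(r^2\log r)$ worth of checks yields the claimed bound $O(N+r^2\log r)$. The delicate point to verify is that the per-cut portion of Lemma~\ref{lem:sweep} really touches only the $O(r)$-sized corner data structures and never revisits the $N$ cells of $P$; this is exactly how the proof of Lemma~\ref{lem:sweep} is organized, since the $O(N)$ term there arises solely from the initial scan. I expect this bookkeeping, together with confirming that two axis-aligned cuts per reflex corner truly exhaust the candidates produced by Lemma~\ref{lem:reflex_only}, to be the only real obstacle in turning the plan into a full proof.
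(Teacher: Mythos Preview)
Your proposal is correct and follows essentially the same route as the paper's own proof: restrict to the $O(r)$ straight cuts along locally reflex tiles via Lemma~\ref{lem:reflex_only}, amortize the $O(N)$ scan of Lemma~\ref{lem:sweep} by performing it once up front, and then spend $O(r\log r)$ per candidate for a total of $O(N+r^2\log r)$. The paper's argument is terser but identical in substance.
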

	
	\rem{
	\begin{proof}
		For any straight cut $\ell$ check if $\ell$ is a valid 2-cut.
		There are $O(r)$ many cuts with $O(N+r\log r)$ time to check if it is a 2-cut.
		We can do the initial scan over the polyomino from Lemma~\ref{lem:sweep} before the first check and reduce the time to check for a 2-cut to $O(r\log r)$.
		In total we have a runtime of $O(N+r^2\log r)$.
	\end{proof}
}
	
	\begin{theorem}
		A decomposition tree of valid 2-cuts for a polyomino $P$ can be used to build a labyrinth constructing $P$. This labyrinth can also be used for pipelining.
	\end{theorem}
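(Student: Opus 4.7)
The plan is to process the decomposition tree bottom-up and realize each node by a gadget in the workspace. Leaves correspond to convex subpolyominoes, which I build by invoking Lemma~\ref{lem:ortho_convex}. Each internal node corresponds to a valid 2-cut, which I realize by a combining gadget generalizing the monotone-case gadgets of Figures~\ref{fig:monotone:p1p2}~and~\ref{fig:monotone:p2p1}.

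The key observation driving gadget design is that a valid 2-cut is, by definition, a partition of a subpolyomino into two pieces that can be pulled apart in opposite directions $d$ and $-d$ without collision. Reversing the motion, one can press the two pieces together along $d$ and obtain the original. I therefore design, for each cut, a pair of staging pockets on opposite sides of the cut axis, with obstacles placed so that a single global step in direction $d$ brings the two subassemblies into flush contact along the cut. As in the monotone case, I commit once and for all to the six-step global cycle (up, right, left, down, right, up) and verify, for each of the constantly many gadget orientations (vertical vs.\ horizontal cut, and each of the two choices of which side arrives from above), that the cut-closing step of the gadget is aligned with the corresponding step of this universal cycle.

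Next I lay out the whole decomposition tree in the workspace: leaf builders occupy the outer perimeter, their outputs feed into the parent combining gadget one tree level in, whose outputs feed into its parent, and so on up to the root, which emits $P$. The main obstacle I expect is guaranteeing that a single global force sequence is simultaneously correct for every gadget: while one gadget performs its combining step, already-combined pieces at higher levels must be held in place by obstacles and not swept across the workspace, and subassemblies destined for different gadgets must be routed without interference. I plan to handle this by the same local ``corridor-stall'' case analysis used in the proof of Lemma~\ref{lem:ortho_convex}, applied independently at each gadget, noting that the bounding box and entry/exit ports of each gadget depend only on its own subpolyomino.

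For the pipelining claim, I would observe that the global six-step cycle repeats indefinitely and that each gadget produces one output per cycle once both of its inputs have arrived. After a start-up phase proportional to the depth of the decomposition tree, every cycle simultaneously advances one new subassembly up each level, so the root gadget emits one finished copy of $P$ per cycle. This yields amortized $O(1)$ construction time per copy and completes both parts of the theorem.
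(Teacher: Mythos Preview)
Your overall approach matches the paper's: process the decomposition tree bottom-up, realize leaves with the earlier lemma, realize internal nodes with combining gadgets derived from the monotone-case gadgets, and pipeline by repeating a fixed global cycle. However, there are two concrete gaps.

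First, the movement cycle. You commit to the monotone six-step cycle (up, right, left, down, right, up) and propose to ``verify'' that it accommodates both vertical-cut and horizontal-cut gadgets. This verification would fail. The monotone cycle was designed so that the combining step is a right/left movement; the rotated gadget for a horizontal cut needs its combining step to be an up/down movement, and there is no point in the monotone cycle at which such a step can occur while the vertical-cut gadgets and the leaf constructors are simultaneously in a consistent state. The paper resolves this by adopting a \emph{seven}-step cycle (right, up, down, up, right, left, down), i.e., the monotone sequence with two additional movements inserted so that both gadget orientations are served; it also notes explicitly that turning the gadgets of Figures~\ref{fig:monotone:p1p2}~and~\ref{fig:monotone:p2p1} by $90^\circ$ clockwise yields the horizontal-cut combiners. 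Without this enlargement of the cycle your single-global-sequence argument does not go through.

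Second, synchronization. You assume that after a warm-up proportional to the tree depth, every gadget outputs one piece per cycle. But the decomposition tree need not be balanced: the two children of an internal node may have root-to-leaf paths of different lengths, so one subassembly arrives before its partner. The paper handles this by inserting delay \emph{loops} on the shorter branch so that the two inputs reach the combining gadget in the same cycle; your proposal does not address this, and without it the pipelining claim breaks down at the first unbalanced node. (A minor related point: the leaves of a 2-cut decomposition are monotone subpolyominoes, not necessarily convex, so you should invoke Theorem~\ref{th:monotone} rather than Lemma~\ref{lem:ortho_convex} at the leaves.)
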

	
	\begin{proof}
		Consider a cycle of the seven unit steps \textit{right, up, down, up, right, left, down}.
		This is the movement sequence which was already seen for convex and monotone polyominoes but with two more movements.
		This cycle preserves the ability to construct monotone polyominoes in the labyrinth above. 
		Also observe that turning the gadgets seen in Figs.~\ref{fig:monotone:p1p2}~and~\ref{fig:monotone:p2p1} by 90 degrees clockwise yields gadgets that put two polyominoes on top of each other.
		
		Transforming a decomposition tree of 2-cuts for a polyomino $P$ can easily be done:
		Consider the layers of the decomposition tree, with the root being layer zero, its children being layer one, and so on.
		In each vertex in one layer either a horizontal or vertical cut is made.
		Corresponding to this cut we construct a gadget putting the two children of this vertex together.
		At some point only monotone subpolyominoes exist.
		These can be build using the methods described above.
		
		The length of a root-leaf-path may vary.
		In this case we can build loops so we can put two polyominoes together at the right time.
	\end{proof}
	
	\begin{theorem}
		Any straight 2-cuttable polyomino $P$ can be build within $O(r)$ unit steps, where $r$ is the number of \revision{locally} reflex tiles in $P$.  $D$ copies require $O(r+D)$ unit steps.
	\end{theorem}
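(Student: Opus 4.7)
The plan is to combine the decomposition tree of straight 2-cuts from Theorem~\ref{th:simple:straight} with the labyrinth-construction recipe of the previous theorem, and then bound the depth of that tree by $r$. By Lemma~\ref{lem:reflex_only} I may assume every cut in the decomposition passes through a locally reflex tile and strictly reduces the number of locally reflex tiles in the resulting subpolyominoes. Hence the decomposition tree has at most $r$ internal nodes, so in particular its depth $d$ is $O(r)$.

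Next I would realize the tree as a physical labyrinth exactly as in the previous theorem: each leaf is a convex-piece factory implemented via Lemma~\ref{lem:ortho_convex}, producing its subpolyomino in a constant number of unit steps, and each internal node is a combining gadget of the type in Figs.~\ref{fig:monotone:p1p2}--\ref{fig:monotone:p2p1} (or the $90^\circ$ rotation for horizontal cuts), which merges two subassemblies in one cycle of constantly many unit steps. Running all factories and gadgets in lockstep inside the global movement cycle, a subassembly advances by one tree level per cycle, so a single copy of $P$ is produced after $O(d)=O(r)$ cycles, giving $O(r)$ unit steps in total.

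For $D$ copies I would invoke pipelining: every factory and every combining gadget repeats its cycle indefinitely, so after the initial $O(r)$-step latency needed to fill the pipeline, a completed copy is ejected every constant number of unit steps, yielding an overall bound of $O(r+D)$. The main obstacle is synchronization: different root-to-leaf paths in the decomposition tree may have different lengths, so the two inputs to a combining gadget would generically arrive in different cycles. As sketched in the proof of the previous theorem, this is fixed by inserting delay loops in the labyrinth whose lengths equalize the arrival times at every internal node; once these delays are in place, both the level-by-level construction schedule and the steady-state throughput follow without further effort.
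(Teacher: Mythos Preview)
Your proposal is correct and follows essentially the same approach as the paper: use Lemma~\ref{lem:reflex_only} to ensure every straight 2-cut removes at least one locally reflex tile, bound the depth of the decomposition tree by $r$, and then invoke the labyrinth construction and pipelining from the preceding theorem. Your argument is in fact more detailed than the paper's, which compresses everything into two sentences; the extra discussion of delay loops for synchronization is exactly the mechanism the paper relies on implicitly.
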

	
	\begin{proof}
		\revision{
		Doing cuts along locally reflex tiles reduces the number of locally reflex tiles by at least one. This implies a maximum depth of $O(r)$ of the decomposition tree and thus, $O(r)$ cycles to produce $P$.
		As seen before, pipelining yields a construction time of $(r+D)$ unit steps, which is an amortized constant construction time if $D\in\Omega(N)$.} 
		\rem{
		 We choose cuts only at \revision{locally} reflex tiles, so each cut reduces the number of \revision{locally} reflex tiles $r$ by at least one, with 
		no more than $r$ cuts.
		When these cuts are made, only convex polyominoes are left which can be assembled in $O(1)$ unit steps.
		However, we may only be able to combine two subpolyominoes after one cycle. 
		This results in $O(r)$ cycles.
		Pipelining yields, as seen before, a construction time of $O(r+D)$ unit steps, which is an amortized constant number of unit steps per copy if $D\in \Omega(r)$.
	}
	\end{proof}
	
	Unfortunately, the number of \revision{locally} reflex tiles $r$ can be in $\Omega(N)$ and thus, we may need $\Omega(N)$ cuts to build the polyomino.
	In particular, Fig.~\ref{fig:2-cut:worstcase} left shows an example which needs $\Omega(N)$ cycles to build. Even scaling by some factor $k$, i.e., replacing each tile by an $k\times k$ supertile, seems not to help.
	Moreover, there are also polyominoes we cannot build by putting two subpolyominoes together at the same time (see Fig.~\ref{fig:2-cut:worstcase} right).
	

	\begin{figure}
		\centering
		\includegraphics[width=0.5\columnwidth]{./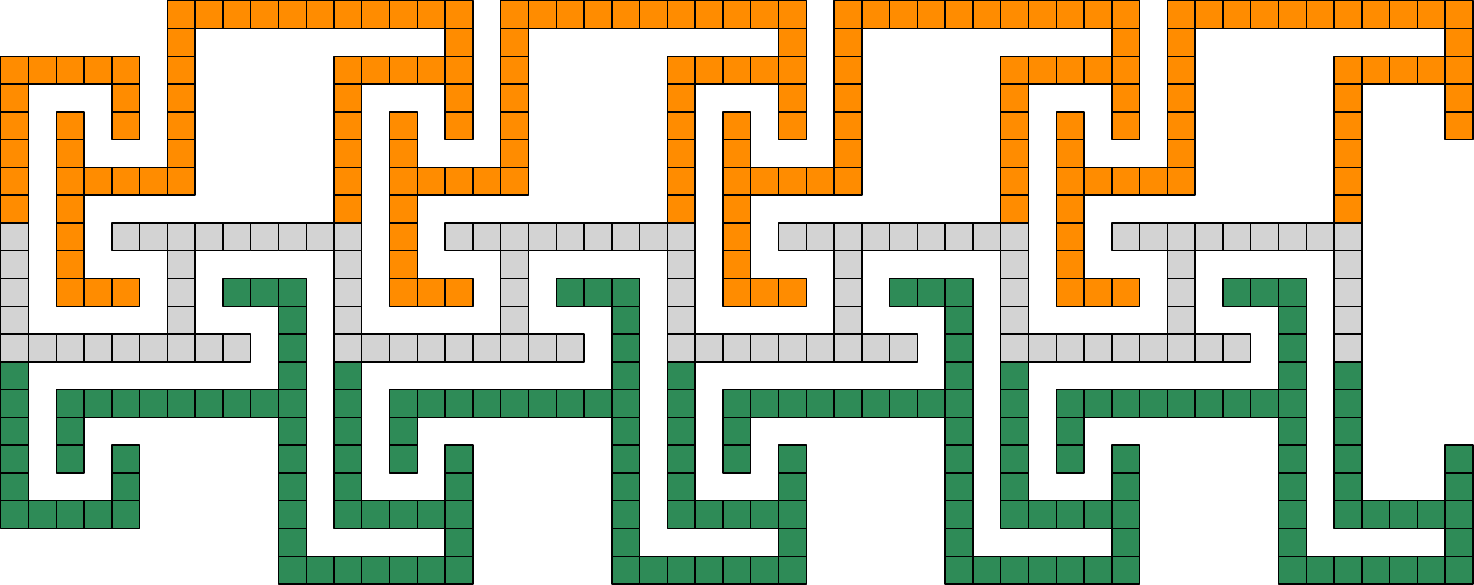}
		\hfill
		\includegraphics[width=0.25\columnwidth]{./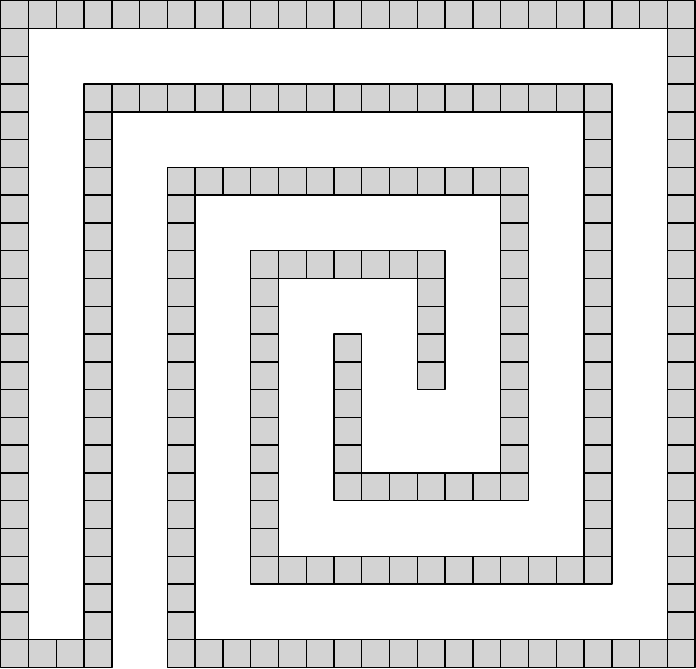}
		\caption{Left: A polyomino needing $\Omega(N)$ steps to build as we cannot separate the green nor the orange part efficiently from the grey part. 
			Right: Polyomino which is not 2-cuttable. Any cut splits the polyomino either in two subpolyominoes which cannot be pulled apart or into more than two subpolyominoes.}
		\label{fig:2-cut:worstcase}
		\vspace*{-6mm}
	\end{figure}
	
	\subsection{Non-Straight Cuts}
	
	Considering any 2-cut makes it more difficult to find cuts, as there are exponential many possible cuts. 
	However, we do not need to consider all cuts.
	For a given start $s$ and end $e$ on the boundary of a polyomino $P$, we can show that it is sufficient to consider only one cut connecting $s$ and $e$.
		The proof is similar to the one of Lemma~\ref{lem:reflex_only}.
	
	\rem{
	\begin{lemma}
		Given two points $s$, $e$ on the boundary of a polyomino $P$ and a cut $\ell$ connecting $s$ and $e$, the cut $\ell$ is a valid 2-cut if and only if any other 2-cut connecting $s$ and $e$ is valid. 
	\end{lemma}
	}
	
	
	\begin{theorem}\label{th:simple:non-straight}
		Given a 2-cuttable polyomino $P$, we can find a 2-cut in time $O(r^2N\log N)$, where $r$ is the number of \revision{locally} reflex tiles in $P$.
	\end{theorem}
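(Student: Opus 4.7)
The plan is to reduce the task of finding a 2-cut to a polynomial enumeration of candidate endpoint pairs on the boundary of $P$, followed by a sweep-based validity test for each candidate, in analogy with the machinery developed for straight cuts in Lemma~\ref{lem:sweep} and Theorem~\ref{th:simple:straight}. The key reduction is the (unstated but referenced) lemma that, given two boundary points $s$ and $e$, any 2-cut connecting $s$ to $e$ is valid if and only if every 2-cut connecting $s$ to $e$ is valid; hence for each pair $(s,e)$ we only need to try one canonical curve joining them.

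First I would argue, as in the proof of Lemma~\ref{lem:reflex_only}, that it suffices to consider candidate endpoints that touch a locally reflex corner of $P$: if a non-straight cut $\ell$ enters or exits the boundary at some other point, we can shift the endpoint along the boundary toward a neighbouring locally reflex tile without losing validity, using the same blocking argument that replaces a blocker by its neighbour. This restricts the number of endpoint pairs to $O(r^2)$. For each such pair $(s,e)$ I would construct one concrete cut, e.g.\ the rectilinear path from $s$ to $e$ obtained by a breadth-first search in the dual grid of interior grid-vertices, which takes $O(N)$ time per pair.

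The second step is the validity check. Given the cut, $P$ splits into two subpolyominoes $P_1$ and $P_2$, and we must decide whether they can be pulled apart in one of four directions. For each of the four candidate pull directions, I would run the same left/right envelope sweep that underlies Lemma~\ref{lem:sweep}, but now over the entire set of corner tiles of $P_1$ and $P_2$ rather than only along a straight line; because a non-straight cut can expose $\Theta(N)$ tiles to blocking in its ``shadow'', the relevant event set has size $O(N)$, and the sweep runs in $O(N\log N)$ time using a balanced search tree keyed by the vertical (or horizontal) coordinate. Lemma~\ref{lem:convex} still applies, so only locally convex tiles need to be tested against the envelopes. If some direction passes, $\ell$ is a valid 2-cut and we return it; otherwise we move on to the next candidate pair.

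Multiplying the $O(r^2)$ candidate pairs by the $O(N\log N)$ validity test (the initial $O(N)$ corner-marking scan of Lemma~\ref{lem:sweep} is done once and reused), and noting that the per-pair BFS is absorbed into the same bound, yields the claimed $O(r^2N\log N)$ running time. The step I expect to be the main obstacle is tightening the validity check to $O(N\log N)$ in the presence of a cut that can wind arbitrarily through $P$: one must be careful to rebuild the left/right envelope data structures from scratch for each candidate cut without paying an extra factor of $r$, and to verify that the correctness argument of Lemma~\ref{lem:convex} (which was phrased for straight cuts) carries over verbatim once the candidate cut is fixed and $P_1, P_2$ are treated as arbitrary simple polyominoes.
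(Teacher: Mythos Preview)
Your plan matches the paper's argument at the top level: restrict to $O(r^2)$ candidate endpoint pairs on the boundary (using the endpoint-only lemma the paper states just before the theorem), build one representative curve per pair, and run the $O(N\log N)$ sweep of Lemma~\ref{lem:sweep} on each, for a total of $O(r^2N\log N)$.

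The substantive difference is in how the representative curve is produced. You propose a plain BFS through interior grid vertices. The paper instead builds a single weighted directed graph $D_P$ on the grid corners of $P$, split into boundary corners $V_B$ and interior corners $V_I$, with edge weights $2$, $1$, or $\tfrac{1}{2N}$ depending on how many endpoints lie in $V_B$; it then runs Dijkstra from each of the $O(r)$ vertices in $V_B$ and keeps every shortest path of weight at most $2.5$. The weight trick guarantees that any such path meets the boundary in exactly two points, so it is automatically a $2$-cut, and it lets all $O(r^2)$ candidates be harvested from only $O(r)$ single-source computations in $O(rN\log N)$ total time. The paper also encodes, via a counter carried in the vertex labels, that a vertical cut never takes three consecutive horizontal unit steps. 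Your BFS is conceptually simpler, but you should say why an all-interior path exists whenever some $2$-cut with the given endpoints exists: in one-tile-wide passages every grid corner already lies on the boundary, so a BFS restricted to interior corners can fail to find a path even though a $2$-cut exists. Handling this is exactly what the paper's weighted-graph device buys.
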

	
	\begin{proof}
		The idea of this proof is to find $O(r^2)$ 2-cuts which are then tested if they are valid.
		\revision{One necessary criterion is that no cut moves three units to the left or right in case of vertical cuts.
			This can be achieved with a directed graph $D_P$. 
			As seen in Fig.~\ref{fig:cut_graph}, we add a set of $O(r)$ vertices that correspond to corners 
of tiles lying on the boundary of $P$ (giving rise to the set $V_B$), or that correspond to corner tiles 
not lying on the boundary (resulting in the set $V_I$).
			We add edges between \emph{adjacent} vertices with weight $\frac 1{2N}$ if both vertices are in $V_I$. If both vertices are in $V_B$, then the edge has weight $2$, otherwise~$1$. 
			
			A 2-cut is represented by a shortest path of weight at most $2.5$ containing exactly two vertices of $V_B$. 
			If we have at least three vertices of $V_B$ in the shortest path, it has length at least 3.
			Thus, paths from one vertex in $V_B$ to another vertex of $V_B$ define cuts going through $P$.
			Finding all shortest paths from one vertex in $V_B$ lasts $O(N\log N)$ time, as there are $O(N)$ edges in $D_P$. This implies a total time of $O(rN\log N)$ for finding all shortest paths of length at most $2.5$.
			
			Because one cut can make $O(N)$ turns, checking whether the cut is valid takes time $O(N\log N)$. Thus, checking all $O(r^2)$ cuts if they are valid needs time $O(r^2N\log N)$. 
			}	
	\rem{
		To do so, we construct a weighted, directed Graph $D_P$ as follows.
		Let $V_B$ and $V_I$ be two vertex sets of $D_P$. 
		For every \revision{locally} reflex tile $r$ let $c$ be a corner of $r$ which lies on the boundary of $P$. We add $(0,c)$ to $V_B$. Furthermore, let $c'$ be a corner of a tile lying on the boundary to $V_B$ through which we can shoot a ray within $P$ (not along its boundary) that moves along a \revision{locally} reflex tile. For every such corner, we add a vertex $(0,c')$ to $V_B$.
		
		For each corner $c''$ of some tile $t$ in $P$ we add vertices $(0,c''),(1,c''),$ and $(2,c'')$ to $V_I$ if the corner does not lie on the boundary. See Fig.~\ref{fig:cut_graph} for the vertex sets $V_B$ and $V_I$.
		Let $V:=V_B\cup V_I$.
		We add an edge:
		\begin{itemize}
			\item $((i,c_1),(0,c_2))$ if $c_1$ lies above $c_2$, and the $y$-coordinates differ by one, i.e., a vertical edge.
			\item $((i,c_1),(i+1,c_2))$ if $i <2$, $c_1$ and $c_2$ have the same $y$-coordinate, and their $x$-coordinate differs by one, i.e., a horizontal edge. 
			\item $((i,c_1),(0,c_2))$ if $i<2$, $c_1$ and $c_2$ have the same $y$-coordinate, their $x$-coordinate differs by one, and $c_2 \in V_B$.
		\end{itemize}	
		
		For an edge $(v,w)$, if either $v$ or $w$ (but not both) is in $V_B$ the cost of this edge is one, if both are in $V_B$ the cost of this edge is two, and $\frac 1 {2N}$ otherwise.
		With these edge weights we ensure the following:
		\begin{itemize}
			\item A shortest path from a vertex $v\in V_B$ to another vertex $w\in V_B$ has weight at most $2.5$
			\item There is no jumping back and forth between two grid points in a shortest path.
			\item A 2-cut is represented by shortest path containing exactly two vertices of $V_B$, i.e., start and end point of the cut.
		\end{itemize}
		
	}
		\begin{figure}
			\centering
			\includegraphics[width=0.39\columnwidth]{./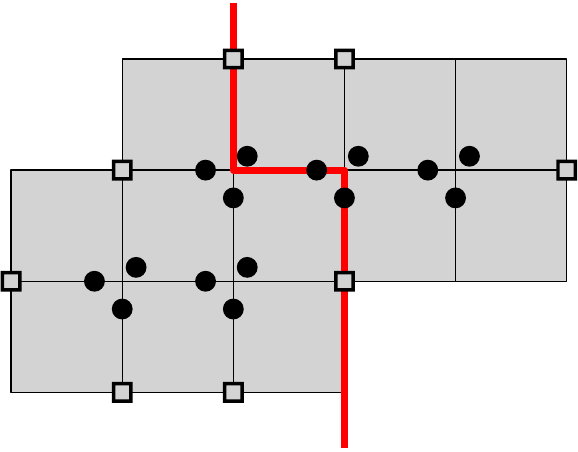}\hfill
			\includegraphics[width = 0.55\columnwidth]{./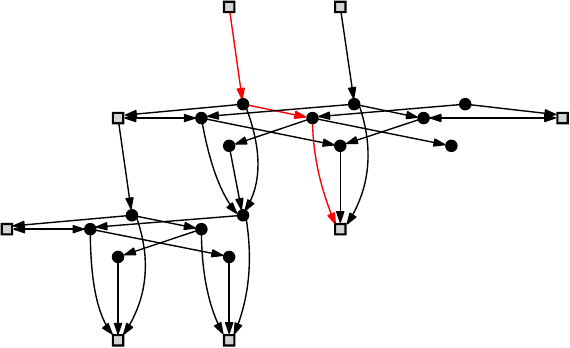}
			\caption{A polyomino $P$ (grey tiles) and the graph $D_P$ (right). The vertices in $V_B$ are shown as squares, the vertices in $V_I$ are shown as disks. Red bold line in $P$ is a 2-cut. Red path in the graph represents this cut.}
			\label{fig:cut_graph}
		\vspace*{-5mm}
		\end{figure}	
	\rem{	
		Having this graph, we can now go on finding paths from each vertex $v\in V_B$ to another vertex $w\in V_B$ defining a cut $\ell_{v,w}$ in $P$. 
		This can easily be done with Dijkstra's Algorithm for finding shortest paths.
		We start the algorithm from each vertex $v\in V_B$ and store each shortest path to another $w\in V_B$ with weight at most 2.5 or else we would have no 2-cut.
		Because there are $O(N)$ many edges and vertices in total, we need $O(rN\log N)$ time to find all desired paths.
		
		It is left to check if there is a path which corresponds to a valid 2-cut.
		Consider some path from $v$ to $w$ and the cut $\ell_{v,w}$.
		Let $P_1$ and $P_2$ be the two subpolyominoes induced by $\ell_{v,w}$.
		Because $\ell_{v,w}$ may has $O(N)$ turns, $P_1$ and $P_2$ can have $O(N)$ corner tiles.
		Therefore a check with the algorithm used in Lemma~\ref{lem:sweep} needs $O(N\log N)$ time.
		In total, as there are $O(r^2)$ many paths, we need $O(r^2N\log N)$ time.
	}
	\end{proof}
	
	\revision{All techniques can be generalized for polyominoes with convex holes.
	However, this increases the number of possible cuts to be checked.}
	In particular, there can be $O(r_h^3)$ possible ways to go through a hole $h$ with $r_h$ \revision{locally} reflex tiles. 
	Thus, the time to find a cut takes $O(N+r^3\log r)$ using straight cuts and $O(r^3N\log N)$ using non-straight cuts.
	
	\rem{
	\subsection{Polyominoes with Convex Holes}
	Now, we consider polyominoes having convex holes.
	As in the previous section we can still use cuts that are made at a \revision{locally} reflex tile.
	However, finding a valid 2-cut may be harder.
	Consider a digraph $\mathcal{D}=(V,A)$ whose vertices $V$ are (vertical) lines $l_i = (s_i,e_i)$ starting in a reflex corner $s_i$ and ending on a point $e_i$ on the boundary.
	There is an directed edge $(i,j)$ if $e_i$ and $s_j$ are on the boundary of the same hole and $s_j$ lies below $e_i$.	
	A maximal path in $\mathcal{D}$ corresponds to a cut in $P$.
	Checking all possible cuts may need $\Theta((\frac r h)^h)$ time, where $r$ is the number of \revision{locally} reflex tiles and $h$ the number of holes.
	We show that it is possible to find a 2-cut in time $O(r^3\log r)$.
	
	 $\mathcal{D}$ has $O(r)$ nodes that have either no incoming edge or no outgoing edge. 
	To find a cut it is sufficient to consider these \emph{start} and \emph{end} points in $\mathcal{D}$ and one path per start-end-pair.
	Let $\Gamma(v)$ be the set of all end points $w$ that are reachable from a start point $v$.
	Denote by $S(v,w)$ a path from a start point $v$ to an end point $w\in\Gamma(v)$.
	
	\begin{lemma}\label{lem:single_path}
		If a path $S(v,w)$ is not a straight 2-cut then there exists no straight 2-cut starting in $v$ and ending in $w$.
	\end{lemma}
	
	\begin{proof}
		Suppose $S(v,w)$ is not a valid 2-cut but there is a valid 2-cut $S'(v,w)$.
		Then, there must exist a tile $t$ between $S(v,w)$ and $S'(v,w)$ blocked by two tiles $t_1$ and $t_2$, so that $S(v,w)$ cannot be a valid cut.
		Let $t'$ be on the same height as $t$ and the first tile on the right of side of $S'(v,w)$.
		\rem{This tile must also be blocked by $t_1$ and $t_2$ because $t$ and $t'$ are on the same side of $S(v,w)$ and the holes are convex.}
		Because the holes are convex $t'$ lies on the same side as $t$ relative to $S(v,w)$ and must also be blocked by $t_1$ and $t_2$.
		We can repeat this until $t'$ lies on the other side as $t$ relative to $S'(v,w)$.
		This implies that also $S'(v,w)$ cannot be a valid 2-cut.
	\end{proof}
	
	\begin{theorem}\label{th:holes:straight}
		In a polyomino with convex holes, we can find a straight 2-cut in $O(N+r^3\log r)$ time.
	\end{theorem}
	
	\begin{proof}
		Consider the digraph $\mathcal{D}$ as defined before.
		Due to Lemma~\ref{lem:single_path} it is enough to check one start-end-path in $\mathcal{D}$.
		Because there are at most $O(r^2)$ start-end-pairs which can be found by a breadth first search, we need to check at most $O(r^2)$ many cuts, each in time $O(r\log r)$.
	\end{proof}
	
	In case of non-straight 2-cuts we do a similar construction as for simple polyominoes:
	
	\begin{theorem}\label{th:holes:non-straight}
		In a polyomino with convex holes, we can find a valid 2-cut in $O(r^3N\log N)$ time.
	\end{theorem}
	
	\begin{proof}
		We build the graph $D_P$ as in Theorem~\ref{th:simple:non-straight} with some adjustments. Define $V_I$ and $V_B$ as before with $V_B$ having vertices on the outer boundary of $P$ only.
		Create a new vertex set $V_H$. 
		Let $c$ be a corner of a tile on the boundary of a hole, for which we can shoot a ray within $P$ that moves along a \revision{locally} reflex tile. 
		We add three vertices $(0,c), (1,c)$ and $(2,c)$ to $V_H$.
		
		The edge set is defined as in Theorem~\ref{th:simple:non-straight} with the adjustment that if $c_1$ and $c_2$ are both in $V_H$ it does not matter if the the $y$-coordinate or $x$-coordinate differs by one.
		We set the cost of such an edge to zero.
		With this adjustment we get $O(r^2)$ many new edges.
		Note that we do not enter a hole twice when looking for shortest paths.
		
		Now that we have a graph with $O(N)$ vertices and $O(N+r^2)$ edges we can find the $O(r^2)$ shortest paths in time $O(r^2(N\log N + N+r^2)) = O(r^2N\log N + r^4) \subseteq O(r^3 N\log N)$.
		Having all $O(r^2)$ paths, we can check every path in time $O(N\log N)$ time, resulting in a total time of $O(r^3 N\log N)$.
	\end{proof}
}
	
	\subsection{Workspace Size and Number of Obstacles}
		\begin{theorem}
			Let $P$ be a polyomino. 
			Then, the workspace needed to assemble $D$ copies of $P$ can be put into a rectangle of width
			$O(w_P\mathcal{L}_P\cdot(\mathcal{C}_P + \sqrt D))$ and height
			$O(h_P\cdot(\mathcal{C}_P + \sqrt D))$, 
			where $w_P$ and $h_P$ are the width and height of $P$, $\mathcal{C}_P$ is the number of movement steps needed, and 
			$\mathcal{L}_P$ is the number of cuts made to decompose $P$ into convex subpolyominoes. 
			Furthermore, we only need $O(N(\mathcal{L}_P+\sqrt{D}))$ obstacles in the workspace.
		\end{theorem}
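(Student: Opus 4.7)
The plan is to lay out the workspace as two physically distinct regions: an upper construction labyrinth that realizes the decomposition tree of $P$, and a lower output buffer that accumulates finished copies while pipelining continues. I would bound the area and obstacle count of each region separately and then add.

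For the construction labyrinth, I would first invoke the decomposition of $P$ into at most $\mathcal{L}_P$ convex leaf subpolyominoes. Each leaf is built by the column-container gadget of Lemma~\ref{lem:ortho_convex}, which occupies a rectangle of width $O(w_P)$ and height $O(h_P)$ and uses $O(1)$ obstacles per released tile. Placing the $\mathcal{L}_P$ leaf gadgets side by side along the top yields a row of width $O(w_P \mathcal{L}_P)$ and height $O(h_P)$. Below this row I would stack the combining gadgets of Figs.~\ref{fig:monotone:p1p2}--\ref{fig:monotone:p2p1} (and their $90^\circ$ rotations for horizontal cuts) layer by layer, one per level of the decomposition tree. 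Because the tree has depth at most $O(\mathcal{C}_P)$ and each layer adds height $O(h_P)$, the stack contributes height $O(h_P \mathcal{C}_P)$. Since pipelining keeps up to $O(\mathcal{C}_P)$ partial assemblies in flight at each depth, the horizontal extent per leaf column grows by a factor of $O(\mathcal{C}_P)$, giving overall construction-region width $O(w_P \mathcal{L}_P \mathcal{C}_P)$ and $O(N \mathcal{L}_P)$ obstacles in total.

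For the output buffer, I would attach to the labyrinth exit a $\sqrt{D} \times \sqrt{D}$ array of storage bins, each of size $O(w_P) \times O(h_P)$. Because one finished copy emerges every $O(1)$ cycles once the pipeline is primed, $D$ copies fit into such a grid while the assembly process is still active, contributing additional width $O(w_P \sqrt{D})$ and height $O(h_P \sqrt{D})$ and $O(N \sqrt{D})$ separating obstacles. Absorbing these two contributions into the construction bounds gives width $O(w_P \mathcal{L}_P (\mathcal{C}_P + \sqrt{D}))$, height $O(h_P (\mathcal{C}_P + \sqrt{D}))$, and $O(N(\mathcal{L}_P + \sqrt{D}))$ obstacles, matching the statement.

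The main obstacle I anticipate is the routing/collision argument: one must verify that the pipelined flow of partial assemblies down the labyrinth respects the seven-step global control cycle without stalls, and that finished copies can be steered into a specific empty cell of the $\sqrt{D} \times \sqrt{D}$ output grid using only $O(1)$ obstacles per copy, so that no auxiliary logic blows up the asymptotic budget. The remaining pieces (counting obstacles per container, summing leaf widths, bounding tree depth by $\mathcal{C}_P$) are routine once the geometric layout is fixed.
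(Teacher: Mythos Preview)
Your overall structure (leaf gadgets side by side, then a stack of combining gadgets of height $O(h_P\mathcal{C}_P)$) matches the paper's layout, and your obstacle accounting for the combining gadgets is fine. But you have misplaced the $\sqrt{D}$ factor, and this creates a genuine gap.

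In the paper, the $\sqrt{D}$ contribution does \emph{not} come from storing finished copies; a completed polyomino is simply flushed out of the workspace (see Fig.~\ref{fig:ortho_convex}, steps~5--8). The $\sqrt{D}$ comes from the \emph{input} side: to pipeline $D$ copies, each tile hopper at a leaf must initially hold $D$ particles, and packing $D$ particles into a container that releases one per cycle takes area $O(\sqrt{D})\times O(\sqrt{D})$ and $O(\sqrt{D})$ obstacle walls. This is why the paper sizes each leaf gadget as $O(\sqrt{D}\,w_i)\times O(\sqrt{D}\,h^*_0)$ and why the obstacle bound picks up the $O(N\sqrt{D})$ term there. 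Your leaf gadgets are only $O(w_P)\times O(h_P)$, which supplies enough particles for $O(1)$ copies; after that the pipeline runs dry, so your construction does not actually assemble $D$ copies.

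Your output-buffer idea is also suspect on its own terms: under uniform global control, every finished copy executes the same moves, so ``steering'' a single copy into a specific empty cell of a $\sqrt{D}\times\sqrt{D}$ grid while the pipeline keeps running is not a routing detail but the crux of the matter. You flag this as an anticipated obstacle, but in fact the paper avoids it entirely by never buffering output. The fix is to drop the output buffer and instead enlarge each leaf hopper to hold $D$ tiles; the width and height then pick up the additive $\sqrt{D}$ term exactly as in the paper's argument.
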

		
		\begin{proof}
			Represent each gadget as a block.
			An example block diagram shown in Fig.~\ref{fig:monotone:block} illustrates the structure of the workspace with width and height of each stage.
			
			Consider the decomposition tree $T$ of $P$ induced by cuts whose leafs are convex polyominoes.
			For convex polyominoes we can use the construction from Lemma~\ref{lem:convex} and for each inner node of $T$ we use the gadgets used in Theorem~\ref{th:monotone} to combine two subpolyominoes.
			Let $P_1,\dots,P_k$ be the convex polyominoes in the leafs of $T$ with width $w_1,\dots,w_k$ and height $h_1,\dots,h_k$.
			To construct one $P_i$, we need $O(\sqrt D w_i)\times O(\sqrt D h^*_0)$ space, where $h^*_0$ is the maximum height of all $P_i$.

			Now consider the $j$-th stage with $j\leq \mathcal{C}_P$ where some polyominoes are combined.
			Let $P'_1$ and $P'_2$ be two such polyominoes.
			After assembling these polyominoes the width of the polyomino $P''_1$ increases to $w''_1 \leq w'_1+w'_2$.
			Thus, the width of the workspace increases by at most $w'_1+w'_2$.
			We observe that any width of $P_1,\dots,P_k$ appears at most $\mathcal{C}_P+1$ times.
			With $w_i \leq w_P$ and $k\in O(\mathcal{L}_P)$, 
			this results in a total width of
			$\sum_{i=1}^{\mathcal{L}_P}w_P(\sqrt{D}+\mathcal{C}_P+1) \in O(w_P\mathcal{L}_P(\sqrt D + \mathcal{C}_P))$.
			
			For the total height consider the maximum height $h^*_j$ of all polyominoes in stage $j\leq \mathcal{C}_P$. 
			Because we need $O(h^*_j)$ space in the vertical direction for stage $j$, we have have a total height of $h_P\sqrt{D}+\sum_{j=1}^{\mathcal{C}_P}h^*_j \in O(h_P(\sqrt{D}+\mathcal{C}_P))$ resulting in a rectangle of size $O(w_P(\sqrt D + \mathcal{C}_P))\times O(h_P(\sqrt D + \mathcal{C}_P))$ enclosing the workspace.
			
			Although the workspace may be large, the number of obstacles needed is smaller.
			First, ignore any obstacle not needed as a stopper (see Fig.~\ref{fig:gadget_cropped}).
			This reduces the number of obstacles to $O(w_P+h_P)$. 
			Because $w_P,h_P\leq N$ this is $O(N)$.
			The same can be done for building the convex polyominoes.
			However, to keep the $D$ tiles in a container we need all $O(\sqrt D)$ obstacles.
			Thus, we have $O(N\sqrt D)$ obstacles to build all convex polyominoes and $O(\mathcal{L}_PN)$ obstacles for the gadgets which is in total $O(N(\mathcal{L}_P+\sqrt{D}))$.
		\end{proof}
		
		\begin{figure}
			\centering
			\includegraphics[width=0.7\columnwidth]{./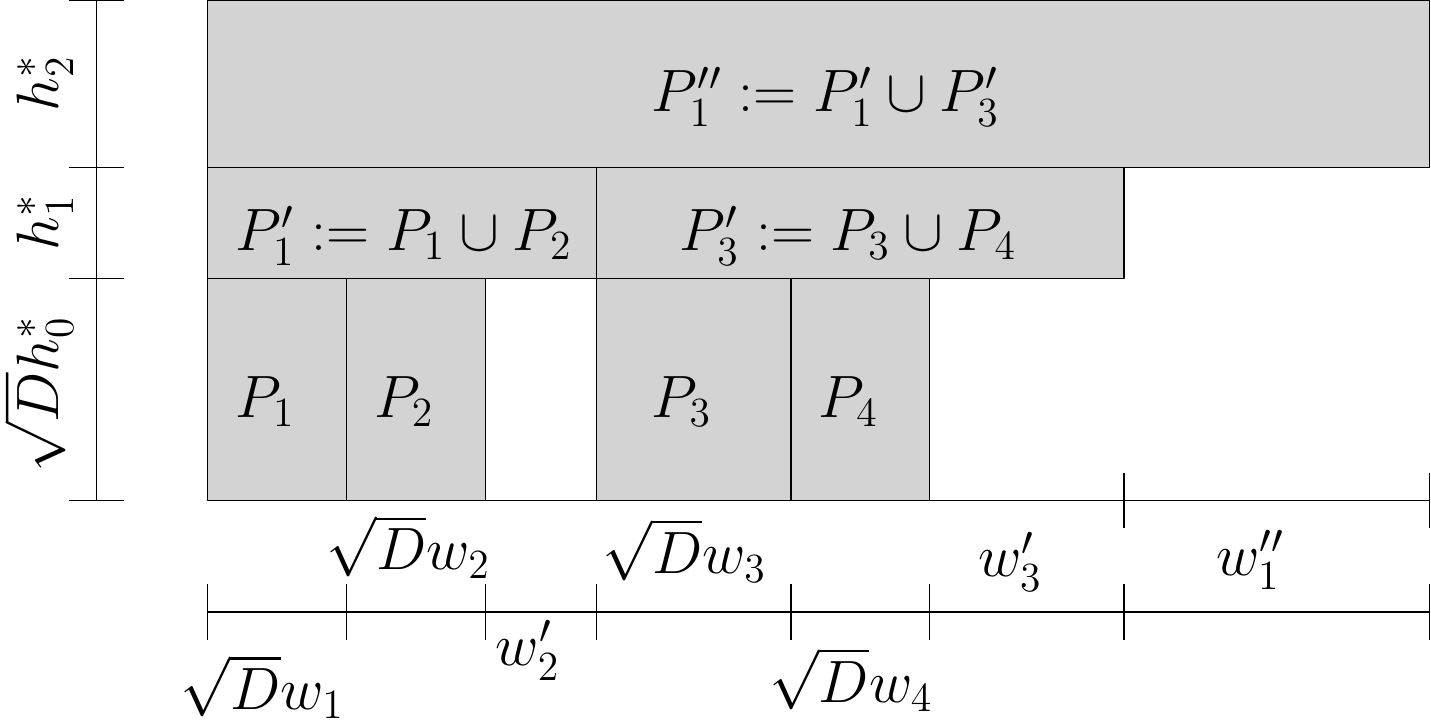}
			\caption{Block diagram of the workspace to construct a monotone polyomino.}
			\label{fig:monotone:block}
		\vspace*{-3mm}
		\end{figure}
		
		\begin{figure}
			\centering
			\includegraphics[width=0.35\columnwidth]{./figures/monotone_1_1.pdf}~~~~~~~\hfil
			\includegraphics[width=0.35\columnwidth]{./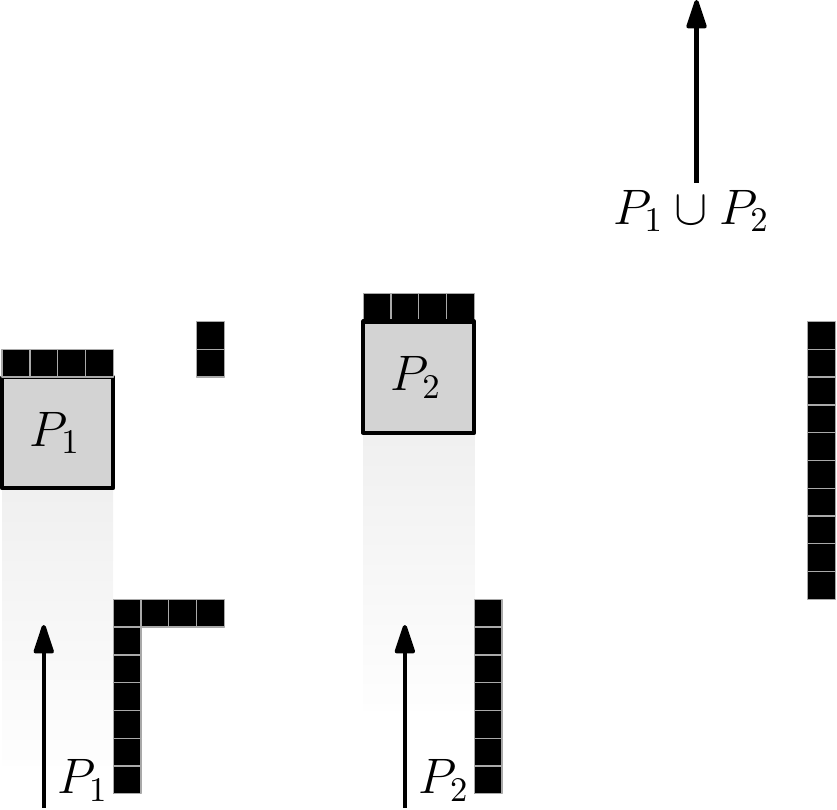}
			\caption{Gadgets assembling two subpolyominoes. Left: With unnecessary obstacles. Right: Without unnecessary obstacles.}
			\label{fig:gadget_cropped}
			\vspace*{-5mm}
		\end{figure}		

	\section{Experimental demonstration}\label{sec:demonstration}
\todo{...the experiment result
	lacks of an extensive comparison and quantitative analysis.}

	\begin{figure}
		\centering
		\includegraphics[width=\columnwidth]{./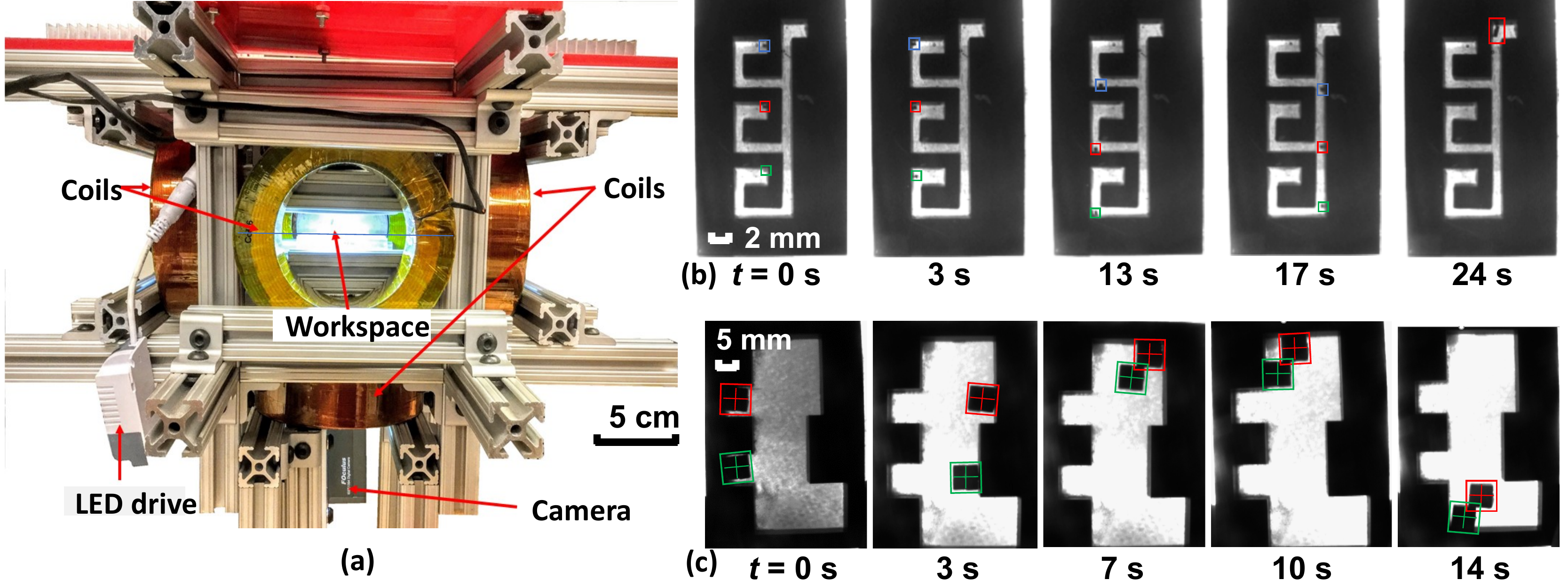}
		\caption{(a) Magnetic manipulation workspace
			(b) frames from an assembly of one column of a polyomino.
			(c) frames from combining two polyominoes.\label{fig:MagSetup}}		
	\end{figure}
	\begin{figure}
	\centering
	\includegraphics[width=\columnwidth]{./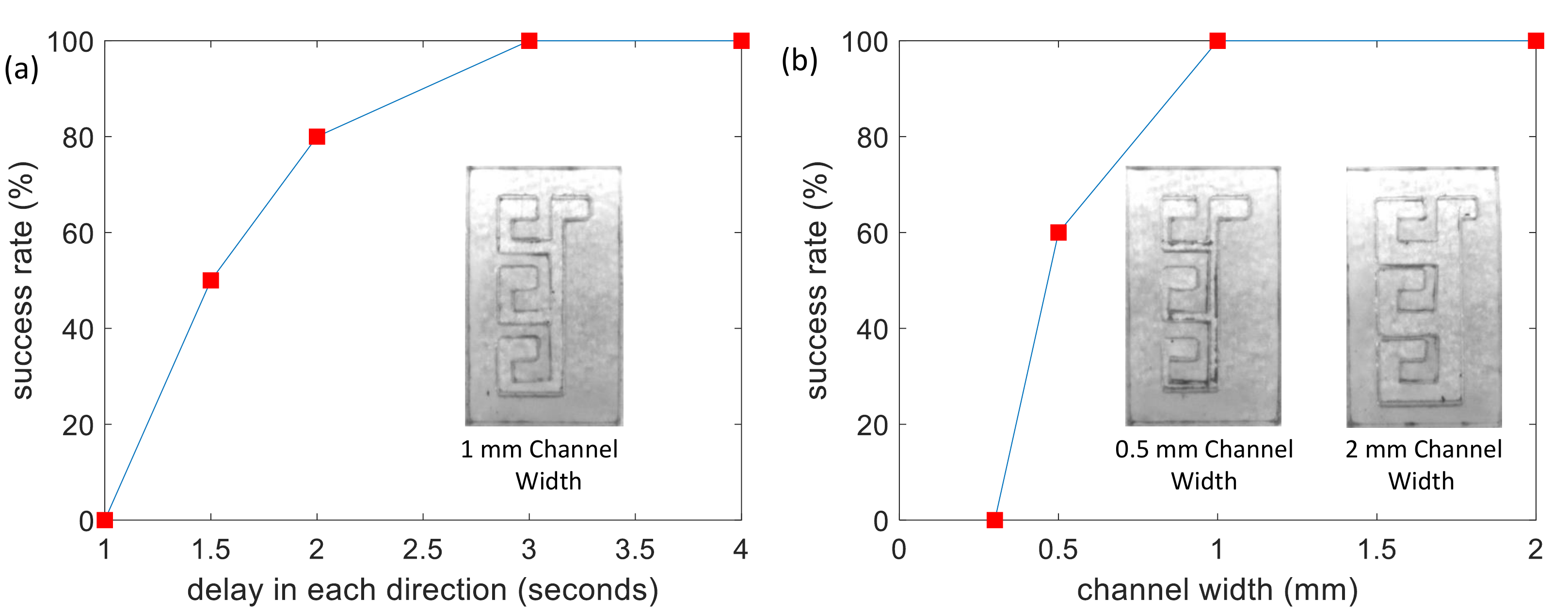}
	\vspace{-0.7cm}
	\caption{\revision{Results from assembly of a micro-scale three-tile column polyomino. There are 10 trials per data point.
		(a) Success rate as a function of the duration control inputs were applied in each of the four directions on a workspace with 1 mm width channels.
		(b) Success rate as a function of channel widths using control inputs applied for 3 s in each direction.}\label{fig:ColumnPlots}}
\vspace{-0.7cm}
\end{figure}
	
We  implemented the algorithms for staged assembly at micro and milli scale. A customized setup  was used to generate a magnetic field to manipulate the magnetic particles.
\paragraph{Experimental Platform}
\todo{The key point is to analyze the robustness of the algorithm if the
	size of particles and manufacturing/obstacles environment are
	inaccurate, because it seems there is a big assumption that the
	obstacles/particles/ environment setup should be accurate , and the
	question is how much it is practical.}
The magnetic setup used for the experiments is shown in Fig.~\ref{fig:MagSetup}, consisting of three orthogonal pairs of coils with separation distance equivalent to the outer diameter (127.5 mm) of a coil. 
 The coils (18 AWG, 1200 turns, Custom Coils, Inc) are actuated by six SyRen10-25 motor drivers, and a Tekpower HY3020E is used for the DC power supply. 
  The electromagnetic platform can provide \revision{uniform magnetic fields of up to 101 G, and gradient fields up to 150 mT/m} along any horizontal direction in the center of the workspace. 
   With flux concentration cores, up to 900 mT/m gradient fields are observed in the experiment. 
   \revision{Each flux concentration core is a  solid iron cylinder 73.1 mm in diameter.}

The workspaces used to demonstrate the sublinear assembly algorithms were designed to replicate  the column assembly in Fig.~\ref{fig:ortho_convex} and the subpolyomino assembly in Fig.~\ref{fig:monotone:p1p2}. Each workspace is made up of two layers of acrylic cut using a Universal Laser Cutter. The base layer is fabricated from 2 mm thick transparent acrylic, and it is glued to 5.5 mm thick  acrylic, which acts as an obstacle layout. In each experiment, the workspace is placed in the center of our electromagnetic platform. The particle tiles are composed of nickel-plated neodymium cube-shaped magnets (supermagnetman.com C0010). The magnet cubes have edge lengths of  0.5 mm for micro-scale and 2.88 mm for milli-scale demonstrations. An Arduino Mega 2560 was used to control the current in the coils and the workspaces were observed with a IEEE 1394 camera, captured at 60 fps. 

\paragraph{Experimental Results}
\todo{The simulation/experiment shows interesting result, however, there is
	no quantitative result. }
\revision{In micro-scale experiments, we filled the workspaces with vegetable oil and placed a magnet cube with 0.5 mm edge length in each of the three hoppers. 
 The workspace used in these experiments was 18 mm wide and 30 mm long. To assemble the column polyomino, a gradient magnetic field of 900 mT/m was applied in the direction sequence $ \langle d,r,u,l \rangle$. Each direction input was applied for a fixed amount of time specified by a {\sc Matlab} program. 
A successful trial requires that all three components are joined and delivered to the top right of the workspace.
 Fig.~\ref{fig:MagSetup}b shows the completed three-tile polyomino and Fig.~\ref{fig:ColumnPlots} shows representative experimental results for the assembly of the column polyomino. 
 Successful assembly depends on the channel widths and the duration of the control inputs.
 Larger channel widths and longer control durations led to high success rates.  
 Trials were always successful when the magnetic field was applied at least 3 s  in each direction and when the channel width was at least 1 mm. }

For milli-scale demonstrations we assembled two polyominoes, as shown in Fig.~\ref{fig:MagSetup}c. Each polyomino is composed of four magnet cubes glued together to form a square shape. The  43 mm $\times$ 62 mm workspace was placed in \revision{a uniform, 101 G magnetic 
field} to control the orientation of the polyominoes and then \revision{manually} tilted in the direction sequence $ \langle u,l,d,r,u \rangle$. See video attachment for experimental demonstrations.

	\section{Conclusion and Future Work}
\todo{More detail about the proposed algorithm with the
	summary of the algorithm would go to the conclusion. For example,  it
	would have a better flow if the detail on O() analysis from
	introduction
	moves to conclusion. Instead, the first sentence in conclusion can be
	moved to introduction. Generally speaking, the introduction usually
	describes the big picture of the problem statement, challenges and the
	proposed algorithm, and conclusion summarizes the algorithm and result
	with more technical detail. }
\revision{
A spectrum of future work remains, most notably issues of robustness in the presence of inaccuracies, as well 
as the extension of our results
to three-dimensional shapes. Questions in 2D include the following. 
Can we guarantee sublinear production times if the polyomino can be scaled by a constant?
Are straight cuts sufficient, i.e., if a polyomino $P$ is 2-cuttable, is $P$ also straight 2-cuttable?
 How hard is it to decide if a polyomino cannot be built at all?
Can we efficiently assemble polyomino $P'$ that approximates $P$?}
\todo{More reference (see reviewers suggestions)}

	\bibliography{IEEEabrv,bibliography}
\end{document}